\newcommand{\C}[1]{\textcolor{red}{#1}} 
\renewcommand{\C}[1]{#1}
  \def\moverlay{\mathpalette\mov@rlay}
  \def\mov@rlay#1#2{\leavevmode\vtop{%
     \baselineskip\z@skip \lineskiplimit-\maxdimen
     \ialign{\hfil$#1##$\hfil\cr#2\crcr}}}
\numberwithin{equation}{section}
\newcommand{\Reals}{{\mathbb R}}         
\newcommand{\HH}{\mathcal H}
\DeclareFontFamily{OT1}{rsfs}{}
\DeclareFontShape{OT1}{rsfs}{m}{n}{ <-7> rsfs5 <7-10> rsfs7 <10-> rsfs10}{}
\DeclareMathAlphabet{\mycal}{OT1}{rsfs}{m}{n}
\newcommand{\scri}{{\mycal I}}
\def\zb{\overline{z}}
\def\Moeb{\mbox{Mb}}
\def\Id{\mbox{Id}}
\def\C{\overline{\mathbb{C}}}
\theoremstyle{plain}
\newtheorem{thm}{Theorem}
\newtheorem{lemma}[thm]{Lemma}
\newtheorem{definition}[thm]{Definition}
\newtheorem{remark}[thm]{Remark}
\newcounter{mnotecount}[section]
\newcounter{mymnotecount}[section]
\title{The Fingerprints of Black Holes - Shadows and their Degeneracies}
\author[M. Mars, C.~F. Paganini, and M.~A. Oancea]{Marc Mars$^\dagger$, Claudio F. Paganini$^\ddagger$, and Marius A. Oancea$^\ddagger$} 
\email{marc@usal.es}
\email{claudio.paganini@aei.mpg.de}
\email{marius.oancea@aei.mpg.de}
\address{$^\dagger$ 
Instituto de F\ifmmode \acute{i}\else \'{i}\fi{}sica Fundamental y Matem\ifmmode \acute{a}\else \'{a}\fi{}ticas, Universidad de
Salamanca, Plaza de la Merced s/n, 37008 Salamanca, Spain}
\address{$^\ddagger$Albert Einstein Institute, Am M\"uhlenberg 1, D-14476 Potsdam,  Germany }
\begin{document}

\date{\today \ {\em File:\jobname{.tex}}}

\begin{abstract} We show that, away from the axis of symmetry, no continuous degeneration exists between the shadows of observers at any point in the exterior region of any Kerr-Newman black hole spacetime of unit mass. Therefore, except
  possibly for discrete changes, an observer can, by measuring the black holes shadow, determine the angular momentum and the charge of the black hole under observation, as well as the observer's radial position
  and angle of elevation above the equatorial plane.
  Furthermore, his/her relative velocity compared to a standard observer can also be measured.
  On the other hand, the black hole shadow does not allow for
  a full parameter resolution in the case of a Kerr-Newman-Taub-NUT black hole, as 
  a continuous degeneration relating specific angular momentum,
  electric charge, NUT charge and elevation angle exists in this case. 
  \end{abstract}

\maketitle

\tableofcontents

\section{Introduction}
The shadow of the black hole is defined as the set of trajectories on which no light from a background source, passing a black hole, can reach the observer. There is hope for the Event Horizon Telescope to be able to resolve the black hole at the center of the Milky Way (Sgr A*) well enough that one can compare it to the predictions from theoretical calculations, see for example \cite{doeleman_event-horizon-scale_2008}. Therefore, analyzing the shadows of black holes is of direct astronomical interest. The first discussion of the shadow in Schwarzschild spacetimes can be found in \cite{synge_escape_1966}, and, for extremal Kerr at infinity, it was later calculated in \cite{bardeen_black_1973}. The perspective of an actual observation has led to a number of advancements in the theoretical treatment of black hole shadows in recent years \cite{1973blho.conf..215B,scalarhair,grenzebach_photon_2014,grenzebach_photon_2015,SCHEE2009,Takahashi2010,Teo2003}. 
In \cite{hioki_measurement_2009,li_measuring_2014} possible ways to extract the black hole parameters from the observation of the shadow have been explored. However, a strict treatment of the question "How much information about the black hole is there in the shape of the shadow?" has, to our knowledge, not been carried out. The only work we are aware of, that takes into account the fact that an observer cannot a priori know what his detailed motion with respect to the manifold is, can be found in \cite{grenzebach_aberrational_2015}. However, the focus in that work is more on the explicit deformation due to different velocities rather than a systematic study on how the freedom of picking any observer at a point influences the possibility of extracting information about the black hole from the shape of the shadow.\\
The most important conceptual idea introduced in the present work is the notion of what it means for the shadows at two points to be degenerate. In the case of degeneracy there exist two distinct observers for which the shadow is absolutely identical. Consequently, an observer cannot distinguish - from shape and size of the shadow alone - between the two situations.\\
We will put the concept of degeneracy to work in this paper by proving the existence of two continuous degeneracies, one parameter curves in the parameter space of observers in the exterior region of Kerr-Newman-Taub-NUT spacetimes. Beyond that, we show that there are no further continuous degeneracies. 

Note that even though the shadow parametrization in \cite{grenzebach_photon_2015} is given for the entire Pleba\ifmmode \acute{n}\else \'{n}\fi{}ski-Demia\ifmmode \acute{n}\else \'{n}\fi{}ski class of black hole spacetimes, we will focus on the subclass of Kerr-Newman-Taub-NUT black holes in the present work. This has two reasons, first for observations within our galaxy the cosmological constant should be negligible, second including the cosmological constant increases the complexity of the arguments without providing additional insight. The case with cosmological constant will be treated in a separate paper.

\subsection*{Overview of this paper}
In section \ref{sec:kerr} we collect some background on the Kerr-Newman-Taub-NUT spacetime. We discuss its properties and discuss the geodesic equation  in these spacetimes in section \ref{sec:geodeq}. 
In section \ref{sec:sphere} we introduce the framework for the discussion of the black hole shadows, in particular the notion of the celestial sphere.  In section \ref{sec:ssshadow} we discuss the shadow for observers at points of symmetry. We use this context to introduce the formal definition of degeneracies and how they arise.
In section \ref{sec:shadowparam} we recall the explicit form of the shadow in the spacetimes we consider.
In section \ref{sec:degeneracies} we introduce the recipe of the search for continuous degeneracies. Finally, in section \ref{sec:main} we present the proof for the main result of our paper. Appendix  \ref{app:A} 
is devoted to deriving
several results on M\"obius transformations needed in the main text
and a list of  somewhat long, explicit expressions have been shifted to  Appendix \ref{app:B}.

\section{The Kerr-Newman-Taub-NUT Spacetime} \label{sec:kerr}
In the following introduction of the spacetimes discussed in this paper, we follow the work of Grenzebach et al. \cite{grenzebach_photon_2014}. However, we are setting the cosmological constant to zero. 
The Kerr-Newman-Taub-NUT black holes are stationary, axially symmetric, type D spacetimes. In Boyer-Lindquist coordinates, $(t,r,\theta,\phi)$, the metric is given by \cite[p. 314]{griffiths2009}:
\begin{equation}
\label{eq:metric}
\begin{split}
    \mathrm{d}s^2 =& \Sigma \left(\frac{1}{\Delta}\mathrm{d} r^2 + \mathrm{d}\theta^2 \right)+ \frac{1}{\Sigma}\left((\Sigma+ a \chi)^2 \sin^2\theta -\Delta \chi^2\right)\mathrm{d} \phi^2 + \\
    & \frac{2}{\Sigma}\left(\Delta\chi-a(\Sigma+a\chi)\sin^2 \theta \right)  \mathrm{d}t\mathrm{d}\phi- \frac{1}{\Sigma}\left(\Delta-a^2\sin^2 \theta \right)\mathrm{d}t^2,
 \end{split}
\end{equation}
   where
\begin{align*}
\Sigma &= r^2 + (l+a \cos  \theta)^2, \\
\chi & =a\sin^2 \theta-2 l (\cos \theta +C), \\
\Delta & = r^2 - 2Mr + a^2-l^2+Q^2.
\end{align*}
The coordinate $t$ takes values in $(-\infty,\infty)$, $r$ in
$(r_+,\infty)$, where $r_+$ is the largest root of $\Delta=0$, while $\theta$ and $\phi$ are standard coordinates on the two-sphere. The relevant physical parameters in the metric are the mass $M$ and charge $Q$, the former will be assumed to be positive
based on physical reasons. Further, we will assume w.l.o.g. that the spin parameter satisfies $a\geq0$. The NUT parameter $l$ which can be interpreted as a gravitomagnetic charge can in principle take any value in $\Reals$.\\
In addition, there is a parameter $C$, first introduced by Manko and Ruiz \cite{Manko-Ruiz}, that induces pathologies on both parts of the rotation axis, unless $C=\pm1$. In the present work we will simply ignore these pathologies and take $C=\pm1$ and only consider the regular part of the rotation axis in these cases (the case $C=-1$ corresponds to the original definition of the NUT metric \cite{NUT1963,Manko-Ruiz}). For a detailed discussion of the rotation axis in the case $l\neq0$ and $a\neq0$ see \cite{TaubNUTsingularities}.\\
We will frequently make use of the following orthonormal tetrad at point $p$:
\begin{align}\label{eq:tetrad}
e_0 &= \left.\frac{(\Sigma + a \chi)\partial_t+ a\partial_\phi}{\sqrt {\Sigma\Delta}}\right|_p ,&\qquad e_1&=\left.\sqrt{\frac{1}{\Sigma}}\partial_\theta\right|_p, \\ \nonumber
e_2&=\left.\frac{-(\partial_\phi+\chi \partial_t)}{\sqrt{\Sigma}\sin \theta}\right|_p,&\qquad e_3&=\left.-\sqrt{\frac{\Delta}{\Sigma}} \partial_r\right|_p. 
\end{align}
The metric does not depend on the coordinates $t$ and $\phi$ and therefore features at least two independent Killing vector fields independent of the choice of the parameters.\\
The Kerr-Newman-Taub-NUT family of metrics contains the Schwarzschild ($a=Q=l=0$),  Kerr ($Q=l=0$), Reissner-Nordstr\"om ($a=l=0$), Kerr-Newman ($l=0$), and Taub-NUT ($a=Q=0$) metrics as special cases.\\
Note that we are only interested in the exterior region of the black hole, hence the region 
at $r>r_+$, where the two horizons $r_{\pm}$ are defined by:
\begin{equation}
r_\pm=M\pm\sqrt{M^2-a^2+l^2- Q^2},
\end{equation}
and we are assuming that the inequality $M^2-a^2+l^2+ Q^2\geq0$ holds.
Note that $\Delta>0$ is satisfied in the exterior region.\\
\subsection{Geodesic Equation}\label{sec:geodeq}
We now focus our attention on null geodesics. For all members of the Kerr-Newman-Taub-NUT family of spacetimes there exist four linearly independent constants of motions for the geodesic equation. The norm of the tangent vector is 
directly related to the mass of the test body:
\begin{equation}
m=g_{\mu\nu}\dot\gamma^\mu\dot\gamma^\nu
\end{equation}
which we will assume to be equal to zero from here on. 
The dot denotes differentiation with respect to the affine parameter
$\lambda$.
The two quantities arising from the Killing vector fields $\partial_t$, $\partial_\phi$:
\begin{equation}
E=-(\partial_t)_\mu \dot\gamma^\mu, \qquad L_z=(\partial_\phi)_\mu\dot\gamma^\mu
\end{equation}
are, for appropriate choice of the parameter along the geodesic, 
the test body's energy and angular momentum  in the direction of 
the axis of symmetry. The fourth constant of motion is called Carter's constant, $K$, and it originates from the existence of a Killing tensor, given by:
\begin{equation}
\sigma_{\mu \nu} = \Sigma ( (e_1)_\mu (e_1)_\nu + (e_2)_\mu (e_2)_\nu ) - (l + a 
\cos \theta)^2 g_{\mu \nu}, \quad  \quad K:= \sigma_{\mu\nu} \dot{\gamma}^{\mu}
\dot{\gamma}^{\nu}.
\end{equation}
This tensor can be obtained from the general expression of the conformal Killing-Yano tensor for the Pleba\ifmmode \acute{n}\else \'{n}\fi{}ski-Demia\ifmmode \acute{n}\else \'{n}\fi{}ski family of solutions, as presented in \cite{PhysRevD.76.084036}. Carter's constant corresponds somewhat loosely to the total angular momentum of the test body.\\
The constants of motion can be used to write the geodesic equation as a 
system of first order ODEs, e.g. \cite[p. 242]{MR1647491}:
\begin{subequations}\label{eq:eom}
\begin{align}
 \dot t &= \frac{\chi(L_z-E \chi)}{\Sigma \sin^2\theta}+\frac{(\Sigma+ a\chi)((\Sigma+a \chi)E-a L_z)}{\Sigma\Delta},\\
\dot \phi &=\frac{L_z-E \chi}{\Sigma \sin^2\theta}+\frac{a((\Sigma+a \chi)E-a L_z)}{\Sigma\Delta} ,\\
\Sigma ^2 \dot r ^2  &= R (r,E,L_z,K) := ((\Sigma+a \chi)E-aL_z)^2-\Delta K,
\label{eq:radial}\\
\Sigma ^2 \dot \theta ^2 &= \Theta (\theta, E, L_z, Q) :=  K -  \frac{(\chi E-L_z)^2}{\sin ^2 \theta}.
\label{eq:theta}
\end{align}
\end{subequations}

Note that $\Sigma + a \chi$ depends only on $r$, so the 
radial and the $\theta$ equations are separated. Moreover, they
are homogeneous in $E$ and thus for $E\neq0$ we have:
\begin{align}
R(r,E,L_z,Q)&=E ^2 R(r,1,L_E,K_E),\\
\Theta (\theta, E, L_z, Q)& =E ^2 \Theta(r,1, L_E,K_E),
\end{align} 
where $L_E=L_z/E$ and $K_E=K/E^2$. These constants 
are invariant
under affine reparametrization of the geodesic.

\subsubsection{The Trapped Set}\label{sec:trapped}
Trapping for null geodesics in black hole spacetimes describes the phenomenon that there exist null geodesics which never leave a spatially compact region of the exterior region. Consequently, these null geodesics never cross the future
or the past event horizons, $\HH^{\pm}$, and they also neither go to future
or past null infinity
$\scri^{\pm}$. From the properties of the function $R(r,E,L_z,K)$, it follows
(see \cite{grenzebach_photon_2014} and \cite{smoothness}) that 
the trapped null geodesics are those which stay at a fix value of $r$
and hence satisfy $\dot r=\ddot r=0$, which corresponds to:
\begin{equation}\label{eq:trappcondi}
R(r,L_E,K_E)=\frac{d}{dr} R(r,L_E,K_E)=0.
\end{equation}
These equations can be solved 
for the constants of motion in terms of the constant value
$r= r_{trapp}$ as 
\cite{grenzebach_photon_2014}: 
\begin{equation}\label{eq:trapparam}
K_E=\left.\frac{16 r^2 \Delta}{(\Delta')^2}\right|_{r=r_{trapp}}, \qquad aL_E=\left.(\Sigma +a \chi)-\frac{4r \Delta}{\Delta'}\right|_{r=r_{trapp}},
\end{equation}
where $\Delta'$ denotes the derivative of $\Delta$ with respect to $r$. 

The allowed values of $r_{trapp}$ are obtained from the condition 
$\Theta\geq 0$ and read  \cite{grenzebach_photon_2014}:
\begin{equation}\label{eq:areaoftrapp}
(4r\Delta-\Sigma \Delta')^2\leq 16 a^2r^2\Delta\sin^2\theta. 
\end{equation}
\section{Trapping as a Set of Directions}\label{sec:sphere}
For the following, instead of the set of trapped null geodesics we will be investigating future trapped or past trapped null geodesics. This kind of trapping exists at every point in the exterior region.
\subsection{Framework}
First we have to introduce the basic framework and notations. Let $\mathcal{M}$ be a smooth  manifold with Lorentzian metric $g$. At any point $p$ in $\mathcal{M}$ choose an orthonormal basis $(e_0,e_1,e_2,e_3)$ for the tangent space, with $e_0$ time-like and future directed. It is sufficient to treat only past directed null geodesics, as the future directed ones can be obtained by a sign flip in the parametrization. The tangent vector to any past pointing null geodesic 
at $p$ can be written as:
\begin{equation}\label{eq:tangentplus}
\dot\gamma( k|_p)|_p =\alpha ( -e_0+ k_1 e_1+ k_2 e_2+ k_3 e_3),
\end{equation} 
where $\alpha= g(\dot\gamma,e_0) >0$ and $k =(k_1,k_2,k_3)$ satisfies $|k|^2 =1$, hence $k\in \mathbb{S}^2$. The geodesic  is independent of the scaling of the tangent vector as this corresponds to an affine reparametrization for the null geodesic, so the specific value of $\alpha$ is irrelevant.
The $\mathbb{S}^2$ is often referred to as the celestial sphere of a time-like observer at $p$, whose tangent vector is given by $e_0$, e.g. \cite[p.8]{penrose_spinors_1987}.\\
For the further discussion we fix the tetrad. We can make the following definition:
\begin{definition}Let $\gamma (k|_p)$ denote a null geodesic through $p$ for which the tangent vector at $p$ is given by equation \eqref{eq:tangentplus}.
\end{definition}
\noindent It is clear that $\gamma(k_a|_p)$ and $\gamma (k_b|_p)$ are equivalent up to parametrization if $k_a=k_b$. Suppose now that $\mathcal{M}$ is the exterior region of a black hole spacetime with a complete $\mathcal{I}^\pm$ and boundary $\HH^+\cup\HH^-$. As in \cite{smoothness} we can then define the following sets on $\mathbb{S}^2$ at every point $p$:
\begin{definition}\label{def:futinf}
The future infalling set: $\Omega_{\mathcal{H}^+}(p):=  \{k\in \mathbb{S}^2 | \gamma(k|_p)\cap\mathcal{H}^+ \neq \emptyset \} $.\\
The future escaping set: $\Omega_{\mathcal{I}^+}(p):=  \{k\in \mathbb{S}^2| \gamma(k|_p)\cap\mathcal{I}^+ \neq \emptyset \}$ .\\
The future trapped set: $\mathbb{T}_+(p) := \{k\in \mathbb{S}^2 | \gamma(k|_p)\cap(\mathcal{H}^+\cup \mathcal{I}^+)  = \emptyset \}$.\\
The past infalling set: $\Omega_{\mathcal{H}^-}(p):=  \{k\in \mathbb{S}^2 | \gamma(k|_p)\cap\mathcal{H}^- \neq \emptyset \}$.\\
The past escaping set: $\Omega_{\mathcal{I}^-}(p):= \{k\in \mathbb{S}^2 | \gamma(k|_p)\cap\mathcal{I}^- \neq \emptyset \}$.\\
The past trapped set: $\mathbb{T}_- (p):= \{k\in \mathbb{S}^2 | \gamma(k|_p)\cap(\mathcal{H}^-\cup \mathcal{I}^-)  = \emptyset \}$
\end{definition}
\begin{definition}
We refer to the set $\Omega_{\mathcal{H}^-}(p)\cup \mathbb{T}_-(p)$ as the shadow of the black hole. 
\end{definition}
Note that light from a background source, i.e. not in between the black hole and the observer and sufficiently far away, can only reach the observer in the set $\Omega_{\mathcal{I}^-}(p)$ and hence the shadow will be black. For any practical purposes one can only extract information about the boundary of the shadow from an observation. In \cite{smoothness} 
it was shown that for the 
Kerr-Newman-Taub-NUT black hole the boundary of the shadow is given by the set $\mathbb{T}_-(p)$ and that this set consists of 
those directions that asymptote to the trapped null geodesics in the past.

\subsection{Degeneration for observers located on an axis of symmetry}\label{sec:ssshadow} 
The following discussions applies for observers located on an axis of
symmetry, i.e. an observer located at any regular point $p$ in the exterior region of a black hole spacetime, for which there exists a one parameter family of diffeomorphisms with closed orbits that leave $p$ invariant. This includes in particular any point in the exterior region of a spherically symmetric black hole spacetime, as well as observers located on the rotation axis of e.g. Kerr. The discussion for points of symmetry, is here treated in a separate section to introduce several important concepts needed for our main theorem. Points of symmetry are special with respect to degeneracies as it was shown in \cite{smoothness} that the shadow for observers at regular points of symmetry in the exterior of Kerr-Newman-Taub-NUT black holes is circular. \\ 
It is well-known (see e.g. \cite[p.14]{penrose_spinors_1987}) that a change of observer (i.e. an ortochronus Lorentz transformation of the tetrad)  corresponds to 
a conformal transformation on the celestial sphere, and vice versa.
Restricting oneself to
orientation preserving transformations, they are isomorphic to M\"obius transformations. A fundamental property of conformal transformations on $\mathbb{S}^2$
is that 
they map circles
into circles. As a consequence if $p_1$ and $p_2$ are points in
(possibly different) spacetimes in the family under consideration,
and both lie  on an axis of symmetry
then, upon identification of the two celestial spheres by a respective choice of
time oriented orthonormal basis, there exists a Lorentz transformation\footnote{By this we always mean an ortochronus Lorentz transformation.} $(\mathbf{LT})$ of the observer such that  $\mathbb{T}_-(p_1)= \mathbf{LT}[\mathbb{T}_-(p_2)]$. This concept is central to our argument. 
\begin{definition}\label{def:shadowdeg}
The shadows at two points $p_1$, $p_2$ are called degenerate if, upon identification of the two celestial spheres by the orthonormal basis, there exists an element of the conformal group on $\mathbb{S}^2$ that transforms $\mathbb{T}_-(p_1)$ into $\mathbb{T}_-(p_2)$.
\end{definition} 

\begin{remark}
The shadow at two points $p_1$, $p_2$ being degenerate implies that for every observer at $p_1$ there exists an observer at $p_2$ for which the shadow on $\mathbb{S}^2$ is identical. Because this notion compares structures on $\mathbb{S}^2$, the two points need not be in the same manifold for their shadows to be degenerate. Just from the shadow alone an observer can not distinguish between these two configurations.
\end{remark}

Combining the discussion above 
with the shape of the shadow at points off the axis when $a \neq 0$ described in the next subsection, we concluide that the only reliable information that 
an observer knowing to live in the exterior of a
Kerr-Newman-Taub-NUT black hole can extract from observing a circular shadow is that he/she is on an axis of the symmetry of the black hole.

\subsection{Parametrization of the Shadow for generic observers}\label{sec:shadowparam}

The shadow at any point in the exterior region of a Kerr-Newman-Taub-NUT spacetime has been explicitly obtained in 
\cite{grenzebach_photon_2014}  and its smoothness properties discussed in 
\cite{smoothness}. Here we only summarize the results that we need later.

From here on for the rest of the paper we will always assume that $a\neq0$ as $a=0$ has been treated in the previous section. Fixing the 
orthonormal tetrad \eqref{eq:tetrad} to which we will refer as ``standard observer'', the celestial sphere  can be coordinated by  standard
spherical coordinates 
$\rho \in[0,\pi]$ and $\psi \in[0,2\pi)$ so that 
\eqref{eq:tangentplus} can be written as:
\begin{equation}\label{eq:tangentsph}
\dot\gamma( \rho,\psi)|_p =\alpha ( -e_0+  e_1 \sin\rho \cos\psi+ e_2\sin\rho \sin\psi +  e_3 \cos\rho).
\end{equation}
At any point $p$ in the exterior region of a Kerr-Newman-Taub-NUT black hole
the curve $\mathbb{T}_{-}(p)$ that defines the shadow is given by the parametric
expression:

\begin{subequations}\label{eq:parametrized}
\begin{align}
\sin\psi&= \frac{\Delta'(x)\{x^2 +(l+a \cos [\theta(p)])^2\}-4 x \Delta(x)}{4ax\sqrt{\Delta(x)} \sin[\theta(p)]}\label{eq:psitox}\\ \nonumber&:=f(x,\theta,M,a,Q,l),\\
\sin\rho&= \frac{4x\sqrt{\Delta(r(p))\Delta(x)}}{\Delta'(x)(r(p)^2-x^2) + 4x \Delta(x)}\label{eq:rhotox} \\ \nonumber &:= h(x,r,M,a,Q,l),
\end{align}
\end{subequations}
where the parameter $x$ takes values in the compact
interval $[r_{min}(\theta(p)),r_{max}(\theta(p))]$ 
and  $r_{min}(\theta)$ and $r_{max}(\theta)$ are obtained by solving 
the equality case in \eqref{eq:areaoftrapp}. Geometrically they
correspond to the smallest and largest values that $r$ 
can take at the 
intersection of a cone of constant $\theta$ with the
area of trapping. The parameter $x$ 
corresponds to the asymptotic value of $r$ along
the past null
geodesic with initial tangent vector along the direction defined by
$\{ \rho(x), \psi(x)\}$. Note that the shadow curve is independent of the Manko-Ruiz parameter $C$.

In \cite{grenzebach_photon_2014} the parametrization  of the shadow curve
was in fact obtained
for the more general case of 
the Kerr-Newman-Taub-NUT-(anti-)de Sitter spacetime family. It was further
extended in  
\cite{grenzebach_photon_2015} 
to the full Pleba\ifmmode \acute{n}\else \'{n}\fi{}ski-Demia\ifmmode \acute{n}\else \'{n}\fi{}ski class. A rigorous proof of the fact that the sets $\mathbb{T}_- (p)$ and $\mathbb{T}_+ (p)$ given by this parametrization are smooth curves on the celestial sphere in the exterior region of any observer in a Kerr-Newman-Taub-NUT spacetime was given in \cite{smoothness}.

One important observation, already made in \cite{grenzebach_photon_2014},
is that the shadow for the standard observer is symmetric on the celestial sphere with respect to the $k_1=0$ plane (i.e. the great circle in the celestial
sphere defined by the meridians $\psi=\pi/2$ and $\psi = - \pi/2$) 
This is simply due to the form of equation \eqref{eq:theta} that gives two solution $\pm\sqrt{\Theta(\theta,L_E,K_E)/ \Sigma}$ for any combination of conserved quantities $L_E$ and $K_E$. Therefore if $(k_1, k_2,k_3)\in \mathbb{T}_- (p)$ 
then we always have that $(-k_1,k_2,k_3)\in \mathbb{T}_- (p)$. 
Further note that from the radial equation \eqref{eq:radial} we get immediately that if $k=(k_1,k_2,k_3)\in \mathbb{T}_+(p)$ then  $k=(k_1,k_2,-k_3)\in \mathbb{T}_-(p)$. Hence the properties of the past and the future trapped sets are equivalent. In particular this implies that if there exists a conformal map $\Psi$ from  $\mathbb{T}_-(p_1)$ to $\mathbb{T}_-(p_2)$ then there exists another conformal map, related to $\Psi$ by conjugation with a reflection about the $k_3=0$ plane, that maps $\mathbb{T}_+(p_1)$ to  $\mathbb{T}_+(p_2)$. 

An observer can only see the past, hence $\mathbb{T}_-(p)$, so we will 
concentrate on this curve in the search of degeneracies.
However the fact that the  properties  $\mathbb{T}_- (p)$ and $\mathbb{T}_+ (p)$
are equivalent 
tells us that our results
will hold true for both.

\section{Which degeneracies exist?}\label{sec:degeneracies}
The question one would like to answer is, which observers can be fully distinguished 
based on the shape of the shadow they observe. A quick inspection of the equations \eqref{eq:parametrized} shows that the shadow is invariant up to a reparametrization $x\rightarrow x/M$ as long as the following quantities are constant:
\begin{equation}\label{eq:obvious}
\theta,\ \frac{r}{M}, \ \frac{a}{M}, \ \frac{Q}{M}, \  \frac{l}{M}.
\end{equation}
With this degeneracy we see that the shape of the shadow can only be affected by the change of dimensionless parameters. There is a discrete degeneracy for two observers with: 
\begin{equation}
M_1=M_2 \quad l_1=-l_2 \quad r_1=r_2 \quad a_1=a_2 \quad Q_1=Q_2 \quad \theta_1=\pi -\theta_2
\end{equation}
In the case $l=0$ this corresponds to a reflection of the observers position with respect to the equatorial plane, while when $l\neq 0$ the spacetime itself
changes. In either case, two observers related by this transformation are fully
indistinguishable from the observation of the shadow. \\
The comparison for the shadows of two arbitrary observers is a difficult
problem and it is unclear to the authors how to determine all
possible degeneracies. We will therefore restrict ourselves to continuous degeneracies. Hence a family of observers who form a $C^1$ curve in the space of parameters for which the shadows are indistiguishable. \\
In the following we will introduce a method to systematically search for continuous degeneracies and to prove when such degeneracies do not exist. 
We will heavily rely on the fact that we have an explicit parametrization $c(x;r,\theta,M,a,Q,l)$ for the curve defining the boundary of the shadow at a point $p$ with coordinates
$r,\theta$ in the exterior region of a Kerr-Newman-Taub-NUT spacetime with parameters ($M$, $a$, $Q$, $l$) (and curve parameter $x$).\\
To studying continuous degenerations we impose that the first variation of the curve is zero. From here on we will look at the shadow as a curve in the complex plane which is obtained from the parametrization \eqref{eq:parametrized} by stereographic projection of the celestial sphere \cite[p.10]{penrose_spinors_1987}:
\begin{subequations}
\begin{align}
c(x)&= \frac{X(x)+ i Y(x)}{1-Z(x)}\label{eq:curvdef},\\
X(x)&= \sin(\rho) \sin(\psi)=h(x) \cdot f(x),\\
Y(x)&= \sin(\rho) \cos(\psi)=\pm h(x) \cdot \sqrt{1-f^2(x)}\label{eq:yparam},\\
Z(x)&= \cos(\rho)=- sgn \left(\frac{\partial h}{\partial x}\right)
\sqrt{1-h^2(x)}\label{eq:sparam}.
\end{align}
\end{subequations} 
The freedom of sign choice in \eqref{eq:yparam} comes from that fact that upon stereographic projection the symmetry with respect to the $k_1=0$ plane on the celestial sphere becomes a reflection symmetry with respect to the real axis for the curve in the complex plane. The sign in $Z$ makes the curve $C^1$
and is the right choice to describe $\mathbb{T}_{-}$ \cite{smoothness}.
If we were to describe $\mathbb{T}_{+}$ instead, the global minus sign
in front would have to be dropped. Outside the area of trapping $sgn\left(\frac{\partial h}{\partial x}\right)$ has a fixed sign. Inside the area of trapping it changes sign when $x=r$, where $h=1$ and thus $Z=0$.  
\\
From the definition of degeneracies for black hole shadows it follows that any degeneracy is characterized by a change in parameters together with a M\"obius transformation on the shadow (as the M\"obius transformation on the complex plane are equivalent to the orientation preserving conformal transformations on the Riemann sphere). Therefore when searching for continuous degeneracies we have to take the M\"obius transformation into consideration. The limitation of our result to continuous degeneracies arises from the fact that we analyze small perturbations, hence we linearize the problem.\\
The first order of the action of any member of the conformal group on $\mathbb{S}^2$ on a curve is given by:
\begin{equation}\label{eq:mobiusinftes}
    \Psi_{\epsilon}(c)= c(x) + \epsilon \vec \xi|_{c(x)} + \mathcal{O}(\epsilon^2),
\end{equation}
where $\epsilon$ is a small parameter and where $\xi$ is a conformal Killing vector field on $\mathbb{S}^2$. 
The first variation of the curve with respect to a parameter $p$ is given by: 
\begin{equation}
   c(x;p + dp) = c(x,p) +\vec V_p dp  + \mathcal{O}(dp^2), 
\end{equation}
where $dp$ is an infinitesimal change of the parameter and $V_p$ is given by $ \partial_{p} c(x,p)$. 
The most generic variation vector for a curve is then:
\begin{equation}
    \vec V = \sum_{p \in\mathcal{P}=\{r,\theta,M,a,Q,l\}}\vec V_p dp + \sum_{\xi\in Lie(Mb)}\vec \xi |_{c(x)} \epsilon_\xi.
    \end{equation} 
We can now formulate a necessary and sufficient condition for the curve to be invariant under a continuous deformation. This is the case if there exists a nontrivial combination of $dp$ and $\epsilon_\xi$ such that $V$ is tangential to the curve.
Letting $n$ be the normal to the curve $c(x,p)$, the condition is that:
\begin{equation}\label{eq:condition}
 \vec V \cdot \vec n \equiv 0
\end{equation} 

has a nontrivial solution in terms of $dp$ and $\epsilon_\xi$. Here we did not yet restrict the vector field $\xi$, however as we will discuss next, there
are a priori restrictions on the most general conformal Killing vector capable
of compensating the deformations induced by the change in parameters.

\subsection{Vector Fields from M\"obius Transformations}
By the definition of degeneracies for every observer at point $p_1$ there exists an observer at point $p_2$ who observes the exact same shadow. For our purpose we can reformulate this statement the following way: If the shadows at two points are degenerate, then there exists a M\"obius transformation that maps the stereographic projection of the shadow of a standard observer at point $p_1$ to the stereographic projection of the shadow of the shadow of the standard observer at point $p_2$.\\
As we have observed in section \ref{sec:shadowparam} the stereographic projection of the shadow of any standard observer is reflection symmetric with respect to the real line. A rather involved argument (which may be of independent
interest) is needed to show that only those conformal transformation that preserve the reflection symmetry can be used to ``counter'' the deformation from the change in parameters (as those correspond to a change between standard observers). The detailed  proof is given  in Appendix \ref{app:A}.\\
On finds that the most general such conformal Killing vector is an arbitrary
linear combination of the  three linearly independent vector fields  given by:
\begin{equation}\label{eq:availablevf}
    \vec\xi_1 = \partial_x,\qquad \vec\xi_2 = x \partial_y + y \partial_x, \qquad \vec\xi_3 = (x^2 - y^2) \partial_x + 2 xy \partial_y,
\end{equation}
in terms of Cartesian coordinates $\{ x,y\}$  on the complex plane, i.e. $z = x + i y$. 

\subsection{Conditions for Continuous Degenerations}
We now start with the explicit calculations. Most of them ae by no means difficult, but they are lengthy and have thus been performed mostly in Mathematica. Here we will describe the essential steps involved. From here on we will restrict to a domain of $x$ such that $sgn\left(\frac{\partial h}{\partial x}\right)$ does not change.  
This does not restrict our argument, as our aim is to prove that a certain quantity is zero independent of $x$. So it is equivalent to consider the problem in an open and dense interval. With:
\begin{equation}
 \vec V_p= \begin{pmatrix}
    \frac{ d(Re(c(x,p)))}{dp}\\
    \frac{ d(Im(c(x,p)))}{dp}
 \end{pmatrix},
\end{equation}
and with the normal vector to a curve parametrized by $x$ in two dimensions given by:
\begin{equation}
 \vec n=\pm \begin{pmatrix}
    \frac{ d(Im(c(x)))}{dx}\\
    -\frac{ d(Re(c(x)))}{dx}
 \end{pmatrix}, 
\end{equation}
we can calculate the various terms that show up in \eqref{eq:condition}. Note here that the sign choice in the definition of the normal vector corresponds to the choice between the inward and the outward pointing normal to the curve. Because we want to find curves with $V\cdot n=0$ it doesn't matter which orientation or normalization we choose for $n$ as long as we choose it consistently, hence we pick the plus. From equation \eqref{eq:curvdef} we directly get:
\begin{align}
    Re(c(x))&=  \frac{X(f(x),h(x))}{1-Z(h(x))},\\
    Im(c(x))&= \frac{  Y(f(x),h(x))}{1-Z(h(x))}.
\end{align}
Plugging everything in, we obtain the following result in terms of $f(x)$ and $h(x)$:
\begin{align}
\vec V_{p}\cdot \vec n=& \frac{h(x)\left(\frac{\partial f(x,p)}{\partial x}\frac{\partial h(x,p)}{\partial p}-\frac{\partial f(x,p)}{\partial p}\frac{\partial h(x,p)}{\partial x}\right)}{\sqrt{1-f^2(x)}\sqrt{1-h^2(x)}(1-\sqrt{1-h^2(x)})^2},\\
\vec\xi_1 \cdot \vec n=& \frac{\sqrt{1-h^2(x)}f(x)h(x)\frac{\partial f(x)}{\partial x}+(1-f^2(x))\frac{\partial h(x)}{\partial x}}{\sqrt{1-f^2(x)}\sqrt{1-h^2(x)}(1-\sqrt{1-h^2(x)})^2},\\
\vec\xi_2 \cdot \vec n=& \frac{h^2(x)\frac{\partial f(x)}{\partial x}}{\sqrt{1-f^2(x)}(1-\sqrt{1-h^2(x)})^2},\\
\vec\xi_3 \cdot \vec n=& \frac{h^2(x)(1-\sqrt{1-h^2(x)})\left(f(x)h(x)\frac{\partial f(x)}{\partial x} +  \frac{\partial h(x)}{\partial x}-f^2(x)\frac{\partial h(x)}{\partial x}\right) }{\sqrt{1-f^2(x)}\sqrt{1-h^2(x)}(1-\sqrt{1-h^2(x)})^4}\\
\nonumber & -\frac{f(x)h^4(x)\frac{\partial f(x)}{\partial x}}{\sqrt{1-f^2(x)}\sqrt{1-h^2(x)}(1-\sqrt{1-h^2(x)})^4}.
\end{align}
At this point it is important to note that $f(x,\theta,M,a,Q,l)$, $h(x,r,M,a,Q,l)$ and all their partial derivatives are rational functions in $x$ after multiplication with $\sqrt{\Delta(x)}$. For a list of all partial derivatives of $f(x,\theta,M,a,Q,l)$ and $h(x,r,M,a,Q,l)$ see Appendix \ref{app:B}. Hence any product of $f$, $h$ and their derivatives which contain an even number of factors
is a rational function in $x$, while any such product with an odd number of 
factors  is a rational function in $x$ after multiplication with $\Delta_r(x)$. (i.e. $h(x)f(x)\frac{\partial f(x,p)}{\partial x}\frac{\partial h(x,p)}{\partial p}$ and $ h^3(x)\left(\frac{\partial f(x,p)}{\partial x}\right)^2 \sqrt{\Delta(x)} $ are both rational functions in $x$).\\
Further we notice that away from the real axis we have $f^2(x)<1$ and outside the area of trapping we always have $h^2(x)<1$.\\ 
\begin{definition}
An degeneration is called intrinsic when there is no need to act with a M\"obius transformation to counter the deformation in the shadow due to the change in parameters.
\end{definition}
The condition for an intrinsic degeneracy of the shadow is then the existence
of a non-trivial value of $dp$ such that the following linear combination 
vanishes:
\begin{equation}\label{eq:cond0}
 \sum_{p\in \mathcal{P}} \left(\frac{\partial f(x,p)}{\partial x}\frac{\partial h(x,p)}{\partial p}-\frac{\partial f(x,p)}{\partial p}\frac{\partial h(x,p)}{\partial x}\right)dp \equiv 0,
\end{equation}
where $\mathcal P$ is the set of parameters within which we are searching for degeneracies of the shadow. 
If we now write down the general linear combination that we required to be zero in condition \eqref{eq:condition}:
\begin{equation}
 \beta \vec\xi_1 \cdot \vec n+ \alpha \vec\xi_2 \cdot \vec n+\gamma \vec\xi_3 \cdot \vec n+ \sum_{p\in \mathcal P} \vec V_{p}\cdot \vec n dp\equiv 0,
\end{equation}
we get one set of terms which are products of $f$, $h$ and their derivatives with an odd number of total powers and another set of terms with an odd number of total powers but an additional factor of $\sqrt{1-h^2(x)}$. Now if $\sqrt{1-h^2(x)}$ is not a rational function (showing this will be part of our program),
then for the above condition to be true, both sets of terms have to be equal to zero on their own, as adding a rational and an irrational function can never be equal to zero unless both functions themselves are equal to zero on their own. This gives us a system of two equations that we can solve for $\beta$ and $\gamma$:
\begin{align}\label{eq:bgexpr}
\beta=& \frac{\sum_{p\in \mathcal P} h(x)\left(\frac{\partial f(x,p)}{\partial x}\frac{\partial h(x,p)}{\partial p}-\frac{\partial f(x,p)}{\partial p}\frac{\partial h(x,p)}{\partial x}\right)dp}{2\left((1-h^2)f(x)h(x)\frac{\partial f(x)}{\partial x}-(1-f^2(x))\frac{\partial h(x)}{\partial x} \right)}\\ \nonumber &+ \alpha\frac{h^2(x)\frac{\partial f(x)}{\partial x}}{2\left( f(x)h(x)\frac{\partial f(x)}{\partial x}-(1-f^2(x))\frac{\partial h(x)}{\partial x} \right)},\\
\gamma=& \frac{\sum_{p\in \mathcal P} h(x)\left(\frac{\partial f(x,p)}{\partial x}\frac{\partial h(x,p)}{\partial p}-\frac{\partial f(x,p)}{\partial p}\frac{\partial h(x,p)}{\partial x}\right)dp}{2\left((1-h^2)f(x)h(x)\frac{\partial f(x)}{\partial x}-(1-f^2(x))\frac{\partial h(x)}{\partial x} \right)}\\ \nonumber &- \alpha\frac{h^2(x)\frac{\partial f(x)}{\partial x}}{2\left( f(x)h(x)\frac{\partial f(x)}{\partial x}-(1-f^2(x))\frac{\partial h(x)}{\partial x} \right)}.
\end{align}
Now we know that $\beta$ and $\gamma$ both are constants. With the above result this is only possible if both terms are independent of $x$ individually. 
Addding them and noticing that $\alpha =0$ is always a possibility, we conclude
that the condition for the existence of a degeneracy of the shadow within a certain set of parameters $\mathcal P$ is the existence of a non-trivial $dp$
satisfying:
\begin{equation}\label{eq:cond2}
  \frac{\partial}{\partial x}\left(\frac{\sum_{p\in \mathcal P} h(x)\left(\frac{\partial f(x,p)}{\partial x}\frac{\partial h(x,p)}{\partial p}-\frac{\partial f(x,p)}{\partial p}\frac{\partial h(x,p)}{\partial x}\right)dp}{2\left((1-h^2)f(x)h(x)\frac{\partial f(x)}{\partial x}-(1-f^2(x))\frac{\partial h(x)}{\partial x} \right)}\right)\equiv 0
\end{equation}
has a non-trivial solution in terms of the $dp$. If only the trivial solution exists than there exists no continuous degeneracy within the parameter set $\mathcal P$. We can ellaborate further on the role of $\alpha$ as follows: 
either
\begin{equation}\label{eq:cond1}
 \frac{\partial}{\partial x}\left(\frac{h^2(x)\frac{\partial f(x)}{\partial x}}{2\left( f(x)h(x)\frac{\partial f(x)}{\partial x}-(1-f^2(x))\frac{\partial h(x)}{\partial x} \right)}\right) \equiv 0,
\end{equation}
and $\alpha$ can take any value but with the the only effect of modiying both $\beta$ and $\gamma$, or \eqref{eq:cond1} does not hold,  and we must take
$\alpha=0$. In no case the validity of \eqref{eq:cond1} affects the existence
of a degeneracy. In fact, one can show that the above condition can never be satisfied but since this is of no relevance to our argument, we will omit the proof here and just assume $\alpha$ to be zero.  For the actual proof this leaves us with the following strategy:
\begin{enumerate}
\item Check whether or not intrinsic degeneracies exist using condition \eqref{eq:cond0}.
\item Check whether eventual intrinsic degeneracies can be used to eliminate parameters from the set within which one has to search for degeneracies.
\item Check that $\sqrt{1-h^2(x)}$ is an irrational function for all possible combinations of the remaining parameters in $\mathcal P$.
\item Check that the denominator of the first term in \eqref{eq:bgexpr} is not equivalent to zero for all possible combinations of parameters.
\item Check whether there exist any non-trivial solutions to \eqref{eq:cond2} for all possible combinations of the remaining parameters.
\end{enumerate}
Note that wherever in these steps we have to show that something is either equivalent to zero or not equivalent to zero the expressions we have to check are polynomials. Hence the condition is that the coefficients for every order of $x$ have to be equal to zero simultaneously, which leaves us with a system of equations that has to be satisfied. These system of equations in the steps above are of different complexities, however for most steps too involved to be solved by hand. Note that the derivation until here is independent of the detailed form of $f(x)$ and $h(x)$ and hence in principle valid for any black hole spacetime where the parametrization \eqref{eq:parametrized} exists, hence given the results in \cite{grenzebach_photon_2015} the following analysis can in principle be carried out for the entire Pleba\ifmmode \acute{n}\else \'{n}\fi{}ski-Demia\ifmmode \acute{n}\else \'{n}\fi{}ski class of black hole spacetimes.

\subsection{Continuous Degeneracies }\label{sec:main}
We now apply the above recipe to the Kerr-Newman-Taub-NUT family, which we are interested in the present work, hence our set of parameters is given by $\mathcal P =\{M,a,Q, l,r,\theta\}$ for this section. We will here re-derive the degeneracy already mentioned in \eqref{eq:obvious} to illustrate the way the method works. We start with the first point in the list, the search for intrinsic degeneracies. 
\begin{lemma}\label{lem:deg}There are two intrinsic degeneracies given by: 
\begin{equation}
  \frac{a}{M}=C_1, \qquad \frac{r}{M}=C_2, \qquad \frac{Q}{M}=C_3, \qquad \frac{l}{M}=C_4, \qquad \theta=C_5.
\end{equation}
and 
\begin{align*}
 & a\sin\theta=C_1,\qquad l+ a \cos \theta = C_2,\qquad Q+ 2a\cos \theta (l+a\cos \theta )=C_3, \\
& r=C_4, \qquad  M=C_5.
\end{align*}
\end{lemma}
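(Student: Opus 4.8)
The plan is to linearise the ``no M\"obius transformation needed'' condition and reduce everything to a small linear-algebra problem. Recall that, writing $f_x=\partial f/\partial x$, $h_x=\partial h/\partial x$ and, for a parameter variation $dp=(dM,da,dQ,dl,dr,d\theta)$, $\delta f:=\sum_{p\in\mathcal P}\tfrac{\partial f}{\partial p}\,dp$, $\delta h:=\sum_{p\in\mathcal P}\tfrac{\partial h}{\partial p}\,dp$ for the induced first variations (with $\partial f/\partial r=\partial h/\partial\theta=0$), condition \eqref{eq:cond0} becomes $f_x\,\delta h-h_x\,\delta f\equiv 0$ in $x$. Geometrically this says the plane curve $x\mapsto(f(x),h(x))$ is infinitesimally invariant as a set, i.e.\ invariant up to a reparametrization $x\mapsto x+\epsilon\,\eta(x)$; and since $f$, $h$ and all their derivatives become rational in $x$ after multiplication by a suitable power of $\sqrt{\Delta(x)}$, it suffices to verify it on an open $x$-interval, the identity then propagating everywhere.

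First I would isolate the parameter combinations on which $(f,h)$ genuinely depends. From \eqref{eq:psitox}, \eqref{eq:rhotox} and $\Delta(x)=x^2-2Mx+(a^2-l^2+Q^2)$, $\Delta'(x)=2x-2M$, one reads off that $f$ depends on the parameters only through
\[
M,\qquad e:=a^2-l^2+Q^2,\qquad P:=l+a\cos\theta,\qquad A:=a\sin\theta
\]
(its numerator through $M,e,P$, its denominator $4ax\sqrt{\Delta(x)}\sin\theta=4Ax\sqrt{\Delta(x)}$ through $M,e,A$), while $h$ depends only through $M,r,e$. Thus the curve factors through $\Pi:(M,a,Q,l,r,\theta)\mapsto(M,r,e,P,A)$, whose differential has rank $5$ generically (for $a\neq 0$, off the axis), so $\ker d\Pi$ is one-dimensional. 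The two degeneracies now fall out. Moving along a fibre of $\Pi$ leaves $f$ and $h$ \emph{pointwise} fixed, so \eqref{eq:cond0} holds trivially; since $e-\bigl(Q^2+2a\cos\theta(l+a\cos\theta)\bigr)=a^2\sin^2\theta-(l+a\cos\theta)^2=A^2-P^2$, such a fibre is cut out by fixing $M$, $r$, $a\sin\theta$, $l+a\cos\theta$ and $Q^2+2a\cos\theta(l+a\cos\theta)$ — the second list in the statement. Separately, the scaling $(M,a,Q,l,r)\mapsto\lambda(M,a,Q,l,r)$ at fixed $\theta$ induces $\Pi\mapsto(\lambda M,\lambda r,\lambda^2 e,\lambda P,\lambda A)$, and a homogeneity count in \eqref{eq:parametrized} shows $f$ and $h$ are unchanged under $x\mapsto\lambda x$ together with this parameter rescaling; differentiating at $\lambda=1$ shows the associated $dp$ solves \eqref{eq:cond0}, with invariants $\theta$, $r/M$, $a/M$, $Q/M$, $l/M$ — the first list, recovering \eqref{eq:obvious}.

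For the converse I would show these are all. Let $dp$ solve \eqref{eq:cond0} and put $du:=d\Pi(dp)=(dM,dr,de,dP,dA)$. The variations depend on $dp$ only through $du$, namely $\delta f=f_M\,dM+f_e\,de+f_P\,dP+f_A\,dA$, $\delta h=h_M\,dM+h_r\,dr+h_e\,de$, and $\ker d\Pi$ — which is exactly the tangent line to the second degeneracy — always solves \eqref{eq:cond0}; so it remains to see that the set of $du\in\Reals^5$ with $f_x\,\delta h-h_x\,\delta f\equiv 0$ is precisely $\Reals\cdot(M,r,2e,P,A)$, the image of the first degeneracy. Substituting the explicit $f,h$ and their derivatives (Appendix~\ref{app:B}), the quantity $f_x\,\delta h-h_x\,\delta f$ becomes, after multiplication by a suitable power of $\Delta(x)$ and clearing the remaining denominators, a polynomial in $x$ whose coefficients are polynomial in $(M,r,e,P,A)$ and linear in $du$. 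Setting every coefficient to zero yields a homogeneous linear system in $du_1,\dots,du_5$ whose matrix one checks to have rank $4$ with one-dimensional kernel $\Reals\cdot(M,r,2e,P,A)$. Hence $\{\,dp:\eqref{eq:cond0}\text{ holds}\,\}=d\Pi^{-1}\bigl(\Reals\cdot(M,r,2e,P,A)\bigr)$ is two-dimensional, spanned by the tangents of the two listed families.

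The hard part will be this rank computation: one must confirm that the polynomial system forces $du\parallel(M,r,2e,P,A)$ for every admissible parameter value, in particular on the lower-dimensional loci ($\cos\theta=0$, $P=0$, $Q=0$, \dots) where $d\Pi$ or the coefficient matrix could drop rank and a spurious extra solution could appear. This is a purely algebraic coefficient match — lengthy but mechanical — and is the natural place for a computer-algebra check, exactly as in the first step of the strategy described above.
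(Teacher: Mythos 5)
Your route is genuinely different from the paper's and, for the existence part, cleaner. The paper substitutes the Appendix~\ref{app:B} derivatives into \eqref{eq:first}, obtains a degree-$11$ polynomial identity in $x$, solves the resulting linear system for the $dp$ (by computer algebra), and then integrates the two resulting vector fields to get the invariants; the first family is recovered exactly as you do, and the second appears only as the outcome of that brute-force solve. Your observation that $f$ and $h$ factor through $\Pi(M,a,Q,l,r,\theta)=(M,r,e,P,A)$ with $e=a^2-l^2+Q^2$, $P=l+a\cos\theta$, $A=a\sin\theta$ makes the second family an \emph{exact} (not merely infinitesimal) degeneracy with no computation at all, and it reduces the uniqueness question from a $6$-unknown to a $5$-unknown linear system. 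The uniqueness step itself (your ``rank $4$'' claim, including the degenerate loci) is deferred to computer algebra, but that is exactly the status of the paper's own ``solving this system leaves us with two degrees of freedom'', so it is not a gap relative to the paper.

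There is, however, one point you must not gloss over: the invariant your fiber argument produces is $Q^2+2a\cos\theta\,(l+a\cos\theta)$ (equivalently, fixing $M,r,A,P$ and $e$), whereas the lemma as printed has $Q+2a\cos\theta\,(l+a\cos\theta)$, and you assert these are ``the second list in the statement''. They are not the same family, and yours is the one consistent with \eqref{eq:cond0}: since $f,h$ depend on $Q$ only through $Q^2$ via $\Delta$, every occurrence of $dQ$ in the linearized condition comes as $Q\,dQ$ (compare \eqref{eq:coeff5} and the $g_2$ coefficients). The tangent field \eqref{Vectorl} of the printed family, with $dQ=2(l+a\cos\theta)\,dl$, gives $de=2(l+a\cos\theta)(2Q-1)\,dl\neq0$ generically while $dM=dr=dP=dA=0$, so along it $\delta f=f_e\,de$, $\delta h=h_e\,de$, and \eqref{eq:cond0} would force $f_x h_e-f_e h_x\equiv0$, i.e.\ varying $e$ alone would be a degeneracy --- which the main theorem (no degeneracy within $\{a,r,Q,\theta\}$ at fixed $M,l$) excludes. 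So either state explicitly that you prove the lemma with $Q^2$ in place of $Q$ (flagging \eqref{Vectorl}, whose $dQ$-equation should read $Q\,dQ=(l+a\cos\theta)\,dl$, as the source of the discrepancy), or your identification with the stated family is wrong; as written, your proof and the statement do not match. Also note that at $Q=0$ the map $\Pi$ drops rank and $\ker d\Pi$ becomes the pure $\partial_Q$ direction, so that locus belongs on your list of special cases to check in the rank computation.
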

\begin{proof}
The derivatives in Appendix \ref{app:B} are written such that every term in \eqref{eq:cond0}: 
\begin{equation}\begin{split}
  \left(\frac{\partial f}{\partial x}\frac{\partial h}{\partial a}\right.&\left.-\frac{\partial f}{\partial a}\frac{\partial h}{\partial x}\right)da +
  \left(\frac{\partial f}{\partial x}\frac{\partial h}{\partial M}-\frac{\partial f}{\partial M}\frac{\partial h}{\partial x}\right)dM +  \left(\frac{\partial f}{\partial x}\frac{\partial h}{\partial r}\right)dr \\
  &  +\left(\frac{\partial f}{\partial x}\frac{\partial h}{\partial Q}-\frac{\partial f}{\partial Q}\frac{\partial h}{\partial x}\right)dQ+ \left(\frac{\partial f}{\partial x}\frac{\partial h}{\partial l}-\frac{\partial f}{\partial l}\frac{\partial h}{\partial x}\right)dl-  \left(\frac{\partial f}{\partial \theta}\frac{\partial h}{\partial x}\right)d\theta\equiv 0
  \end{split}
\label{eq:first}
\end{equation}
has the same denominator. The numerator in the above equation is a polynomial in $x$ of order $11$. Hence this condition gives us a system of 11 equation. Solving this system leaves us with two degrees of freedom. One of the solution is given by $dl=ldM/M$. Insewrting this yields the following set of ODEs:
\begin{equation}
  \frac{da}{a}=\frac{dM}{M}, \qquad \frac{dr}{r}=\frac{dM}{M}, \qquad \frac{dQ}{Q}=\frac{dM}{M}, \qquad \frac{dl}{l}=\frac{dM}{M}, \qquad d\theta=0,
\end{equation}
which can be integrated to give:
\begin{equation}\label{eq:Mdeg}
  \frac{a}{M}=C_1, \qquad \frac{r}{M}=C_2, \qquad \frac{Q}{M}=C_3, \qquad \frac{l}{M}=C_4, \qquad \theta=C_5.
\end{equation}
where $C_1$, $C_2$, $C_3$, $C_4$ and $C_5$ are integration constants. 
We now explain a method that will be used several times below. 
A degeneration involving four integration constants means that (locally) the parameter
space is threaded by a congruence of curves, with any two points along the
same curve having identical
shadows. Consider now another degeneracy, independent of the previous
one. This means that the vector field tangent to the new congruence
of curves is linearly independent of the previous one.
Consider a point $p$ in parameter space where the two vectors are linearly
independent. At that point, and in fact in an open neighbourhood thereof,
the two congruences of curves are nowhere tangent to each other. It follows
that the shadow at any point in this open set is now invariant under
a two parameter family of transformations, i.e. a two-dimensional
surface in parameter space. Consider a hypersurface passing through $p$
and transverse to  the first congruence. The intersection of this hypersurface
with the invariant two-dimensional surface is necessarily a non-trivial
degeneration curve, which obviously does not belong to the first congruence. This means that we can look for linearly
independent degenerations by restricting the problem to a hypersurface
transverse to the original one. This greatly simplifies the computations.
Geometrically, the procedure is analogous to performing a gauge fixing.
In summary, the idea is to use the existing degenerations to reduce the order 
of the problem. Point (2) in the strategy outlined above refers precisely 
to this ``gauge fixing'' procedure.

Applying this strategy, the second degeneracy condition can be found without
loss of generality by setting $dM=0$ (the foliation by hypersurfaces
is given by $M = \mbox{const}$, which indeed is transverse to the congruence
of curves defined by \eqref{eq:Mdeg}). Solving the set
of equations obtained from (\ref{eq:first}) with $dM=0$ yields:

\begin{equation}
d \theta = \frac{\sin \theta}{a} d l, \quad 
da = - \cos \theta d l , \quad dQ =   2 (l + a \cos \theta) d l,
\quad dr =0, \quad dM=0,
\label{Vectorl}
\end{equation}
%
%
%
%
%
%
and can be integrated to yield: 
\begin{equation}\label{eq:ldeg}
  a\sin \theta =C_1,\ l+ a \cos \theta = C_2,\ Q+ 2a\cos \theta (l+a\cos \theta )=C_3,\ r=C_4, \ M=C_5, 
\end{equation}
where $C_1$, $C_2$, $C_3$, $C_5$ and $C_5$ are again integration constants.
\end{proof}
The first degeneracy can be ``gauge fixed'' immediately and globally
by fixing $M = \mbox{const}$ and restricting the whole
problem to this lower dimensional parameter space. We want to exploit
in a similar way the second degeneracy and reduce the problem further.
The vector field along the second degeneration can be read off directly from
\eqref{Vectorl} and it always has a non-zero component along
the $l$ direction. Thus, a suitable family of transverse hypersurfaces is $l
= \mbox{const}$.

Now we want to prove that if we set $M=\mathrm{const}$ and $l=\mathrm{const} $ then there is no further degeneracy in $\mathcal P=\{a,r,Q,\theta\}$. We start with point (3) of the recipe in the previous section. 
\begin{lemma}\label{lem:hirrat}The function $\sqrt{1-h^2(x)}$ is irrational.
\end{lemma}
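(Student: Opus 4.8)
The plan is to show that $\sqrt{1-h^2(x)}$ cannot be a rational function of $x$ by exhibiting a branch point. Recall from \eqref{eq:rhotox} that
\[
h(x) = \frac{4x\sqrt{\Delta(r)\Delta(x)}}{\Delta'(x)(r^2-x^2)+4x\Delta(x)},
\]
so that
\[
1-h^2(x) = \frac{\bigl(\Delta'(x)(r^2-x^2)+4x\Delta(x)\bigr)^2 - 16x^2\Delta(r)\Delta(x)}{\bigl(\Delta'(x)(r^2-x^2)+4x\Delta(x)\bigr)^2}.
\]
The denominator is a perfect square of a polynomial, so $\sqrt{1-h^2(x)}$ is rational if and only if the numerator
\[
P(x) := \bigl(\Delta'(x)(r^2-x^2)+4x\Delta(x)\bigr)^2 - 16x^2\Delta(r)\Delta(x)
\]
is a perfect square in $\Reals[x]$. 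First I would compute $\deg P$: with $M=\mathrm{const}$, $l=\mathrm{const}$ and $\Delta(x)=x^2-2Mx+a^2-l^2+Q^2$ a monic quadratic, $\Delta'(x)(r^2-x^2)$ has degree $3$ and $4x\Delta(x)$ has degree $3$, so the bracket has degree $3$ and $P$ has degree $6$; the subtracted term has degree $4$, hence $\deg P = 6$ with leading coefficient $(-2+4)^2 = 4$. So $P$ is a perfect square iff $P(x) = 4\,\bigl(x^3 + \cdots\bigr)^2$ for some monic cubic, equivalently iff $P$ has a root structure consisting of three double roots (counted with multiplicity).

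The key observation is that $x=r$ is a root of $P$: at $x=r$ the bracket becomes $4r\Delta(r)$, so $P(r) = 16r^2\Delta(r)^2 - 16r^2\Delta(r)\Delta(r) = 0$. Next I would check the multiplicity of this root. Writing the bracket as $B(x)$, we have $P = B^2 - 16x^2\Delta(r)\Delta(x)$ and $P' = 2BB' - 32x\Delta(r)\Delta(x) - 16x^2\Delta(r)\Delta'(x)$. At $x=r$: $B(r)=4r\Delta(r)$, and $B'(x) = \Delta''(x)(r^2-x^2) - 2x\Delta'(x) + 4\Delta(x) + 4x\Delta'(x) = \Delta''(x)(r^2-x^2) + 2x\Delta'(x) + 4\Delta(x)$, so $B'(r) = 2r\Delta'(r) + 4\Delta(r)$. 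Hence $P'(r) = 2\cdot 4r\Delta(r)\bigl(2r\Delta'(r)+4\Delta(r)\bigr) - 32r\Delta(r)^2 - 16r^2\Delta(r)\Delta'(r) = 16r^2\Delta(r)\Delta'(r) + 32r\Delta(r)^2 - 32r\Delta(r)^2 - 16r^2\Delta(r)\Delta'(r) = 0$. So $x=r$ is (at least) a double root, which is necessary for $P$ to be a square. The plan is then to compute $P''(r)$ and show it is generically nonzero, so that $x=r$ is \emph{exactly} a double root, and — crucially — to show that $x=r$ is \emph{not} a root of the cofactor $P(x)/(x-r)^2$ for generic parameters; if it were, we would need $P$ divisible by $(x-r)^4$ and then by a further double factor, forcing rigid relations among $a,r,Q$. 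Concretely, one factors $P(x) = (x-r)^2\,\tilde P(x)$ with $\deg\tilde P = 4$, and $\sqrt{1-h^2}$ is rational iff $\tilde P$ is a perfect square of a quadratic; this is a closed polynomial condition on $(a,r,Q)$ (the resultant-type discriminant of $\tilde P$ seen as "square minus square" must vanish identically), and I would verify in Mathematica that it fails for, say, one explicit choice of parameters in the allowed range, hence fails on an open dense set, which suffices by the argument in the recipe that we only need the statement on an open dense subset of parameter space.

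The main obstacle is the last step: ruling out that $\tilde P$ is a perfect square for \emph{all} $(a,r,Q)$ in the relevant region. Since $\tilde P$ is an explicit quartic in $x$ with coefficients polynomial in $a,r,Q,M,l$, "being a perfect square" is equivalent to the vanishing of two polynomial expressions in those coefficients (the conditions that the quartic $\tilde P = c(x^2+px+q)^2$, obtained by matching the degree-$3$ and degree-$1$ coefficients and then demanding consistency of the degree-$2$ and degree-$0$ ones), and I would exhibit an explicit parameter point where these do not both vanish. A mild subtlety is the behaviour at $a=0$ (the denominators in $f$ degenerate) and at the boundary $\Delta(r)=0$; but $a\neq 0$ is standing assumption and $\Delta(r)>0$ in the exterior, so these are harmless. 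The degree count and the double root at $x=r$ are robust; everything else is a finite Mathematica check of non-vanishing of specific polynomials, exactly as flagged in the remark following the five-step strategy.
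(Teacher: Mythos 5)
Your setup coincides with the paper's: $\sqrt{1-h^2(x)}$ is rational if and only if the sextic $P(x)=\bigl(\Delta'(x)(r^2-x^2)+4x\Delta(x)\bigr)^2-16x^2\Delta(r)\Delta(x)$ is the square of a polynomial, and your degree count and the observation that $x=r$ is a double root of $P$ are correct. The genuine gap is in your last step. You reduce the claim to ``the residual quartic $\tilde P=P/(x-r)^2$ is not a perfect square'' and then propose to verify this only at one explicit parameter point, concluding that it fails on an open dense set of parameters and that this suffices. It does not: the lemma is needed at \emph{every} admissible parameter point, because the splitting into rational and irrational parts that yields \eqref{eq:bgexpr} and \eqref{eq:cond2} is performed at each fixed point of parameter space, and a continuous degeneration curve could in principle lie entirely inside the exceptional variety on which $\tilde P$ is a perfect square; a genericity statement would not exclude such curves (this is exactly the kind of situation the paper has to treat separately under the name of restricted degeneracies). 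Moreover, the ``open and dense'' remark you invoke refers to the interval of the curve parameter $x$ on which $\mathrm{sgn}(\partial h/\partial x)$ is constant, not to an open dense subset of parameter space, so it cannot be used to weaken the lemma to a generic statement. The step ``$P''(r)$ is generically nonzero'' has the same defect: parameter values where $x=r$ has multiplicity four would still need a separate argument.

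For comparison, the paper closes the argument with a short coefficient-matching computation valid for \emph{all} admissible parameters: any polynomial square root of the sextic must be $\pm(2x^3+K_2x^2+K_1x+K_0)$; the coefficients of orders $0$, $1$ and $5$ determine $K_0,K_1,K_2$ up to a sign $\epsilon=\pm1$; the choice $\epsilon=-1$ forces $16x^2\Delta(r)\Delta(x)\equiv0$, impossible in the exterior, while $\epsilon=+1$ forces $2Mr+\beta=0$ and $\Delta(r)=\beta<0$ with $\beta=a^2-l^2+Q^2$, contradicting $\Delta(r)>0$ and $r>r_+>0$. If you wish to keep your factorization route, you must replace the single-point Mathematica check by an argument of this type, showing that the perfect-square conditions on $\tilde P$ are incompatible with $M>0$, $r>r_+$, $\Delta(r)>0$ for every parameter choice, not just a generic one.
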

\begin{proof}
We need to prove that
\begin{equation}\label{eq:irrat}
  [\Delta'(x)(r^2-x^2) + 4x \Delta(x)]^2- 16x^2 \Delta(r)\Delta(x) = P(x)^2
\end{equation}
admits no solution where $P(x)$ is a polynomial on $x$. The leading term
in the right-hand side is $4 x^6$, which combined with the fact that
a global sign in $P(x)$ is irrelevant, shows that $P(x)$ must be of the form
$P(x)= 2 x^3 + K_2 x^2 + K_1 x + K_0$. The zero, first and fifth order coefficients in \eqref{eq:irrat} are immediately solved to give:
\begin{equation*}
  K_2 = - 6 M, \quad \quad
  K_1 = - 2 \epsilon (r^2 + 2 \beta) , \quad \quad
  K_0 = 2 \epsilon M r^2,
  \end{equation*}
where $\epsilon = \pm 1$ and $\beta := a^2 - l^2 +q^2$.
The choice $\epsilon=-1$ makes
$P(x) \equiv \Delta'(x) (r^2 -x^2) + 4 x \Delta(x)$ and equation (\ref{eq:irrat})
becomes $16 x^2 \Delta(r) \Delta(x)=0$, which is impossible for $r$
in the exterior region. For the choice 
$\epsilon =1$ the coefficients in $x^4$ and $x^3$ in (\ref{eq:irrat})
impose, respectively:
\begin{align*}
  2 M r + \beta & =0, \\
  - 2 ( 2 M r + \beta ) - \Delta(r) + \beta & =0.
\end{align*}
Since in the exterior region $r> r_+ > 0$, the first requires $\beta < 0$
and the second $\Delta(r) = \beta < 0$,
which is impossible.
We conclude that $\sqrt{1-h^2(x)}$ is an irrational function.
\end{proof}

Next we check that the denominator in \eqref{eq:cond2} is non-trivial for all allowed parameter combinations.
\begin{lemma}\begin{equation}
\left((1-h^2)f(x)h(x)\frac{\partial f(x)}{\partial x}-(1-f^2(x))\frac{\partial h(x)}{\partial x} \right)\not \equiv 0
\end{equation}\end{lemma}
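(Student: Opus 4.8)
The plan is to show that the displayed expression,
\begin{equation*}
 D(x) := (1-h^2)f h\,\frac{\partial f}{\partial x} - (1-f^2)\,\frac{\partial h}{\partial x},
\end{equation*}
cannot vanish identically in $x$ for any admissible choice of the remaining parameters $\{a,r,Q,\theta\}$ (with $M$ and $l$ fixed, $a\neq 0$, $r>r_+$, and $\theta$ not on the axis). The first step is to clear denominators: using the observation recorded just before Lemma~\ref{lem:deg} that $f$, $h$ and their $x$-derivatives become rational functions of $x$ after multiplication by $\sqrt{\Delta(x)}$, I would multiply $D(x)$ through by a suitable power of $\Delta(x)$ and by the common denominators listed in Appendix~\ref{app:B}, obtaining a polynomial identity $\widetilde D(x)\equiv 0$ to be refuted. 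Because $D$ mixes the even product $f^2$, $h^2$ with the odd product $f h\,\partial_x f$, and $\sqrt{1-h^2}$ has been shown irrational in Lemma~\ref{lem:hirrat}, I would first note that no factor of $\sqrt{1-h^2}$ appears here, so $D$ is genuinely (after the clearing step) a rational function of $x$ and the reduction to a polynomial identity is legitimate.

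The second step is to extract the leading (or a conveniently chosen) coefficient of $\widetilde D(x)$ in $x$ and show it is nonzero. Here the key structural fact is that $\partial_x f$ and $\partial_x h$ are \emph{not} proportional: $f$ depends on $\theta$ and not on $r$, while $h$ depends on $r$ and not on $\theta$, and their $x$-dependence enters through $\Delta(x)$, $\Delta'(x)$ in different algebraic combinations (compare \eqref{eq:psitox} and \eqref{eq:rhotox}). So if $D\equiv 0$ one could solve for $\partial_x h/\partial_x f = f h(1-h^2)/(1-f^2)$, forcing a rigid relation between the $r$-data and the $\theta$-data that is incompatible with $a\neq 0$. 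Concretely, I would evaluate the purported identity at the two boundary endpoints $x=r_{min}(\theta)$, $x=r_{max}(\theta)$, where $f^2=1$ (these are the points where $\sin\psi=\pm1$), since there the factor $(1-f^2)$ vanishes and $D\equiv 0$ would reduce to $f h\,\partial_x f = 0$ at those points; ruling this out (using $h\neq 0$ off the trapped region and $\partial_x f\neq 0$ at a generic endpoint, or alternatively that $f\equiv\pm1$ is impossible) already gives a contradiction unless the whole edifice degenerates.

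The cleanest route, and the one I would actually write up, is to compare orders in $x$: $f$ after multiplication by $\sqrt\Delta$ has a numerator of degree $3$ and denominator of degree $1$ in $x$, and similarly for $h$, so $\widetilde D$ has a definite top-degree term whose coefficient is a polynomial in $\{a,r,Q,\theta,M,l\}$; I would display that coefficient and observe it factors as (something manifestly nonzero, e.g.\ a power of $a\sin\theta$ times a power of $\Delta(r)$) times a residual polynomial, and then either the residual polynomial is a nonzero constant or one checks its finitely many coefficients do not all vanish simultaneously subject to $M^2-a^2+l^2+Q^2\ge 0$ and $r>r_+$. This last verification is the main obstacle: it is the same kind of Mathematica-assisted coefficient comparison used in Lemmas~\ref{lem:deg} and~\ref{lem:hirrat}, and the difficulty is purely in handling the size of the polynomial, not in any conceptual subtlety. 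Once some coefficient of $\widetilde D(x)$ is shown not to vanish for admissible parameters, we conclude $D(x)\not\equiv 0$, which is the assertion of the lemma.
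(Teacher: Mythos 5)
Your overall strategy coincides with the paper's: substitute the parametrization \eqref{eq:parametrized}, clear the $\sqrt{\Delta(x)}$ and the common denominators so that the claim reduces to the non-vanishing of a polynomial in $x$ (in the paper this is the sixth-order polynomial $g_1(x)$ appearing in \eqref{eq:g1}, together with the manifestly non-trivial prefactors $\sqrt{\Delta(r)}$ and $-x(\Delta'(x))^2-2\Delta(x)(\Delta'(x)-x\Delta''(x))$, whose leading term is $-4x^3$), and then argue that its coefficients cannot all vanish for admissible parameters. The gap is that you stop exactly where the proof begins: you never exhibit any coefficient, and you defer the simultaneous-vanishing analysis to an unspecified computer check, calling it ``purely a matter of handling the size of the polynomial.'' It is not: individual coefficients of $g_1$ \emph{do} vanish on subvarieties of parameter space, so your hoped-for shortcut that the leading coefficient is ``manifestly nonzero, e.g.\ a power of $a\sin\theta$ times a power of $\Delta(r)$'' does not close the argument. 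The paper's proof is precisely the needed case analysis: the zeroth-order coefficient \eqref{eq:zerothcoeff} is $-32M^2r^2(l+a\cos\theta)^2(r^2+(l+a\cos\theta)^2)$, which vanishes only if $l=-a\cos\theta$; imposing this, the first-order coefficient becomes $-64MQ^2r^4$, forcing $Q=0$; and then the fifth-order coefficient $-192M(-Q^2+2Mr)$ is manifestly nonzero, a contradiction. Without this (or an equivalent) explicit chain, the lemma is asserted rather than proved.

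Your auxiliary suggestions do not repair this. The endpoint evaluation at $x=r_{min}(\theta)$, $r_{max}(\theta)$, where $f^2=1$, only shows $D\neq 0$ there if $h\neq0$, $h^2\neq1$ \emph{and} $\partial_x f\neq 0$ at that endpoint; but from \eqref{eq:fxder} $\partial_x f$ vanishes at the real root of $(M-x)^3=M(M^2-a^2-Q^2+l^2)$, which for special admissible parameters can coincide with an endpoint of the trapping interval determined by \eqref{eq:areaoftrapp}. Since the lemma must hold for \emph{every} admissible parameter combination (it guarantees the denominator in \eqref{eq:cond2} is never identically zero), an argument valid only at ``generic'' endpoints leaves exactly the degenerate cases uncovered -- and those are the cases the paper's coefficient analysis (the $l=-a\cos\theta$, $Q=0$ branch) is designed to handle.
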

\begin{proof} Plugging in the parametrization \eqref{eq:parametrized} we get:
\begin{equation}\label{eq:g1}
\frac{\sqrt{\Delta(r)}\{-x(\Delta'(x))^2-2\Delta(x)(\Delta'(x)-x \Delta''(x))\} g_1(x)}{16a^2 x \sqrt{\Delta(x)}\sin^2\theta(4x \Delta(x)+(r^2-x^2)\Delta'(x))^3} \not\equiv 0,
\end{equation}
where $g_1(x)$ is given by the following polynomial of order six:
\begin{equation}
\begin{split}
  g_1(x)=&16x\left[x^2+(l+a\cos\theta)^2\right]\Delta(r)\left(2\left[x^2+(l+a\cos\theta)^2\right]\Delta'(x)-8x\Delta(x)\right) - \\
&\left(4x \Delta(x)+(r^2-x^2)\Delta'(x)\right)\left(
32 a^2 x (r^2-x^2)\sin^2\theta \right .  \\
& \left .
+ (r^2 + (l + a \cos \theta)^2 )  \left (
-32 x \Delta(x) 
+ 8 (x^2 + (l + a \cos \theta)^2)\Delta'(x) 
 \right ) 
\right). 
\end{split}
\end{equation}
The first factor in the numerator of \eqref{eq:g1} is clearly non-zero for an observer in the exterior region.  The second factor 
 is a poynomial in $x$ with leading term $-4 x^3$, hence non-indetically zero.
The zeroth order coefficient for $g_1(x)$ is: 
\begin{equation}\label{eq:zerothcoeff}
-32M^2r^2 (l+a\cos \theta )^2 (r^2+(l+a\cos \theta )^2),
\end{equation}
thus the only way this can be zero for an observer in the exterior region is if $l=-a\cos \theta$. Plugging that in for the other coefficient we get that the first order coefficient is given by $-64M Q^2 r^4$ and the fifth order coefficient is given by $-192M(-Q^2+2Mr)$. Those can never be equal to zero at the same time which finishes the proof for this lemma.
\end{proof}
When we plug the parametrization into the numerator 
inside the parenthsis in \eqref{eq:cond2} we get this is equal to:
\begin{equation}\label{eq:T6}
\frac{\{-x(\Delta'(x))^2-2\Delta(x)(\Delta'(x)-x \Delta''(x))\} g_2(x)}{2 a^2 \sqrt{\Delta(x)}\sin \theta (4x \Delta(x)+(r^2-x^2)\Delta'(x))^3}, 
\end{equation}
where $g_2(x)$ is given by the following polynomial of order five:
\begin{equation}
\begin{split}
    g_2(x)=& 2a \left[x^2+(l+a \cos \theta)^2\right](4x \Delta(x)+(r^2-x^2)\Delta'(x)) \cdot \\
    &(2Q dQ+ 2a da+ (2r-2M) dr)-\\
    &\left\{\vphantom{\frac{1}{2}}16x(x^2-r^2)(da+a \cot \theta d\theta)\Delta(x)+ 16a Q x \left[r^2+(l+a \cos \theta)^2\right]dQ+  \right.\\
    & 16 a^2x \left[r^2+(l+a \cos \theta)^2\right]da+ (2x-2M)\cdot \\
    &\left[\vphantom{\frac{1}{2}} 8a r  \left[x^2+(l+a \cos \theta)^2\right] dr+ 4 (r^2-x^2)(x^2+l^2-a^2 \cos^2 \theta) da \right.\\
  & \left.  \left. +\frac{4a(r^2-x^2)(2a l + \cos \theta(x^2+l^2+a^2+a^2\sin^2 \theta ))}{\sin \theta }d\theta\right]\right\}\Delta(r).
\end{split}
\end{equation}
We can now plug \eqref{eq:g1} and \eqref{eq:T6} into condition \eqref{eq:cond2} to obtain:
\begin{equation}\label{eq:cond3}
\frac{\partial}{\partial x}\left( \frac{8x\sin\theta g_2(x)}{\sqrt{\Delta(r)}g_1(x)} \right)\equiv 0.
\end{equation}
At this point we introduce the notion of a restricted degeneracy. 
\begin{definition}
A restricted degeneracy is one where a combination of parameters has to be zero instead of just being constant. 
\end{definition}
Since a degeneracy is defined by a curve in parameter space, two things may happen. Either the curve
is tangent to the submanifold of parameter space defined by the 
restrited degeneracy, or it is transverse to it. Im the latter case,
the curve leaves immediately the submanifold, and hence the degeneration
curve must exist away from the submanifold. It follows
that the only degeneracies that one could be missing by the general analysis
are those satisfying not only that the parameters are zero, but also that their
variation is zero, so that the curve is tangent to the restricted submanifold.

An example of a restricted degeneracy is 
$\sin \theta=0$, under which 
condition \eqref{eq:cond3} is obviously satisfied. By the argument above, 
the corresponding degeneration curves must satsify $d\theta=0$. The other parameters can vary arbitrarily in this case. Thus,
with a slight abuse, we recover the degeneracy on the rotation axis. 
Of course, the argument in this case is not fully sound since
it ignores  the fact that the coordinate system and the shadow parametrization breaks down on the axis.  This argument just serves the purpose to illustrate
the concept of a restricted degeneracy.  The situation on the rotation axis was treated properly in section \ref{sec:ssshadow}. In the following we will always assume that $\sin \theta \neq0$. 
\begin{lemma}\label{lem:onlyg2}\begin{equation}\frac{\partial}{\partial x} \left( \frac{8x\sin\theta g_2(x)}{\sqrt{\Delta(r)}g_1(x)} \right)\equiv 0 \quad
\quad \Longrightarrow \quad \quad g_2(x)\equiv 0. \end{equation}
\end{lemma}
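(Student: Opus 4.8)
The plan is to reduce the statement to an elementary comparison of two explicit polynomials in $x$. Since we have assumed $\sin\theta\neq 0$ and $\Delta(r)>0$ in the exterior region, the factor $8\sin\theta/\sqrt{\Delta(r)}$ is a nonzero constant in $x$, so the hypothesis says exactly that $x\,g_2(x)/g_1(x)$ is independent of $x$. The preceding lemma guarantees $g_1\not\equiv 0$; hence on an open dense set of $x$ this ratio equals some constant $k$, and since $x\,g_2-k\,g_1$ is a polynomial vanishing there, we obtain the identity $x\,g_2(x)=k\,g_1(x)$. As $g_1\not\equiv 0$, proving $g_2\equiv 0$ is the same as proving $k=0$, so from here on I would assume $k\neq 0$ and aim for a contradiction.

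First I would compare the coefficients of $x^0$: the left-hand side has none, so $k\,g_1(0)=0$ and hence $g_1(0)=0$. The proof of the preceding lemma records the constant term of $g_1$ as $-32M^2r^2(l+a\cos\theta)^2\big(r^2+(l+a\cos\theta)^2\big)$, which in the exterior region ($M>0$, $r>0$) vanishes only if $l+a\cos\theta=0$. Substituting $l=-a\cos\theta$, a short direct computation (using $l^2-a^2\cos^2\theta=0$ and $2al+\cos\theta(l^2+a^2+a^2\sin^2\theta)=0$ in the displayed expression for $g_2$) shows that $g_2(0)=0$ as well, for every choice of the variations $da,dr,dQ,d\theta$. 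Therefore $g_1=x\,\bar g_1$ with $\bar g_1$ of degree at most $5$, and dividing the identity by $x$ gives $g_2(x)=k\,\bar g_1(x)$. Evaluating at $x=0$ once more, $0=g_2(0)=k\,\bar g_1(0)$, and $\bar g_1(0)$ is the coefficient of $x^1$ in $g_1$ at $l=-a\cos\theta$, which the preceding lemma records as $-64MQ^2r^4$; in the exterior region this vanishes only for $Q=0$. So a nonzero $k$ can survive only on the locus $l=-a\cos\theta$, $Q=0$.

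Excluding this residual locus is the technical heart of the proof. There both $g_1$ and $g_2$ simplify markedly, and I would compute them explicitly. One finds that $g_1/x$ still vanishes at $x=0$, with $(g_1/x^2)\big|_{x=0}=96M^2r^4\neq 0$, so $g_1$ has a zero of order exactly two at the origin; the identity $g_2=k\,(g_1/x)$ then forces $g_2$ to vanish to order exactly one there, and comparing the coefficients of $x^1$ gives a relation of the shape $-16a^2r^2\Delta(r)\cos\theta\,d(l+a\cos\theta)=96k\,M^2r^4$. If $l=0$ the left-hand side is zero (then $\cos\theta=0$), contradicting $k\neq 0$ outright; if $l\neq 0$ one is left with the coefficients of $x^2,\dots,x^5$, an over-determined linear system in $da,dr,dQ,d\theta,k$, which I would check admits only the trivial solution. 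In every case $k=0$, i.e. $g_2\equiv 0$.

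The step I expect to be the main obstacle is the last one: away from the generic situation the evaluation at $x=0$ loses its force, and one must descend through the low-order Taylor coefficients of the (now much simpler but still multi-term) polynomials $g_1$ and $g_2$ and verify that no nonzero proportionality constant is compatible with all of them. This bookkeeping is routine but, as elsewhere in this section, best carried out with a computer algebra system.
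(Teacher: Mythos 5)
Your skeleton is the same as the paper's: rewrite the hypothesis as $x\,g_2(x)=k\,g_1(x)$ for a constant $k$ (the paper writes $g_1=Bxg_2$), use the constant term of $g_1$, namely $-32M^2r^2(l+a\cos\theta)^2\bigl(r^2+(l+a\cos\theta)^2\bigr)$, to force $l=-a\cos\theta$, then use the first-order coefficient $-64MQ^2r^4$ to force $Q=0$. Your intermediate claim that $g_2(0)=0$ at $l=-a\cos\theta$ for \emph{arbitrary} variations is correct (all three relevant brackets in the constant term of $g_2$ vanish there), and it is in fact cleaner than the paper, which at this point additionally imposes $d\theta=\cot\theta\,a^{-1}da$, imported from the restricted-degeneracy (tangency) discussion preceding the lemma. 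The genuine divergence is in the residual case $l=-a\cos\theta$, $Q=0$: the paper keeps those tangency conditions (and $dQ=0$), so the coefficient of $x$ in $g_2$, which equals $-16a^2r^2\Delta(r)\cos\theta\,d(l+a\cos\theta)$, vanishes, and comparison with the coefficient $96M^2r^4\neq0$ of $x^2$ in $g_1$ gives the contradiction; you instead leave the variations free and propose to check the full overdetermined linear system in $da,dr,d\theta,k$ coming from the coefficients of $x^2,\dots,x^5$.

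That last check is the decisive step of your argument and, as written, it is only announced ("I would check"), not performed — so the proposal is incomplete exactly where the work lies. The good news is that it does go through, and cheaply, with no case split on $l$: at $l=-a\cos\theta$, $Q=0$ write $g_2=\sum_i c_i x^i$ and $g_1/x=\sum_i b_i x^i$. A direct expansion gives $c_5=8a\bigl(a\,da+(r-M)dr\bigr)-8A\Delta(r)$ and $c_4=-3M\,c_5$ identically in the variations (here $A=da+a\cot\theta\,d\theta$), whereas $b_5=32r(4M-r)$ and $b_4=-384M^2r$, so that
\begin{equation*}
0=c_4+3M\,c_5=k\,(b_4+3M\,b_5)=-96\,k\,M\,r^2,
\end{equation*}
forcing $k=0$ and hence $g_2\equiv0$. (This is the same $Mw_3+w_2$-type combination the paper exploits later in the proof of the main theorem.) With this two-line closure your route is complete, and it even establishes the lemma as literally stated, i.e.\ without invoking the auxiliary tangency conditions $d(l+a\cos\theta)=0$, $dQ=0$ that the paper's own proof relies on.
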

\begin{proof} First note that \eqref{eq:cond3} can only be true if either $g_1(x)= B x g_2(x)$ for some non-zero constant $B$, or if $g_2(x)\equiv0$.
We will now exclude the first possibility. Note that the zeroth order coefficient of $x g_2(x)$ is zero and with the zeroth order coefficient for $g_1(x)$ given in \eqref{eq:zerothcoeff}. Thus, the only chance for the two to be proportional is if: 
\begin{equation}
l=-a\cos \theta. \label{lexp}
\end{equation}
We fixed $l$ this requires for 
$d \theta =\cot \theta a^{-1} da$
to hold. However plugging these two condition into  $g_2(x)$ we get that now 
its zeroth order coefficient also vanishes. Thus not only the zeroth, but also the first term in $g_1(x)$ must be zero. 
Plugging (\ref{lexp})  into $g_1(x)$ it follows that its first order coefficient is  given by $-64MQ^2 r^4$, which vanishes only if $Q=0$. Setting $Q=0$ and $dQ=0$ everywhere, the first order coefficient in $g_2(x)$ zero, while 
the  second order coefficient of $g_1(x)$ is $96M^2 r^4$. This is manifestly
non.zero and we reach a contradiction. Thus, the only possibility is $g_2(x)=0$
and the Lemma is proved.
\end{proof}

From this lemma, the remaining task is to show that
there exists no non-trivial solution for $g_2(x)\equiv0$ which is equivalent to condition \eqref{eq:cond0}. We emphasize, in particular, that the previous lemma
already implies that all degenerations of the shadow must be intrinsic.

The next Theorem, which is the main result of this paper, proves that there are no more degeneracies than those
already found.
\begin{thm}The only continuous degenerations of the black hole shadow for observers located at coordinate position $r$, $\theta$ in the exterior region of Kerr-Newman-Taub-NUT black holes with parameters $M$, $a$, $Q$ and $l$ are given for observers such that their parameters have the same value for all the following functions:  
\begin{equation}
  \frac{a}{M}=C_1, \qquad \frac{r}{M}=C_2, \qquad \frac{Q}{M}=C_3, \qquad \frac{l}{M}=C_4, \qquad \theta=C_5.
\end{equation}
or 
\begin{equation}
  a\sin \theta =C_1,\ l+ a \cos \theta = C_2,\ Q+ 2a\cos \theta (l+a
\cos \theta)=C_3,\ r=C_4\ M=C_5.
\end{equation}
\begin{proof}
The two degeneracies have already been derived in Lemma \ref{lem:deg}. Given Lemma \ref{lem:onlyg2} we know that the condition for degeneracies to exist is given by \eqref{eq:cond3}. 
The only thing remaining to show is that $g_2(x)\equiv0$ has no non-trivial solutions. The highest order coefficient is given by:
\begin{equation}\label{eq:coeff5}
aQ dQ+a (r-M)dr+(a^2-\Delta(r))da -\frac{a\cos\theta\Delta(r)}{\sin\theta}d\theta=0.
\end{equation}
We solve this for $dr$ and substitute back into $g_2$. This leads to a a third
order polynomial in $x$, i.e. $g_2 = \sum_{i=0}^3 w_i x^i$, and each coefficient
$w_i$ must vanish. The combination $M w_3 + w_2$ is very simple:
\begin{align*}
M w_3 + w_2 = - \frac{16 \Delta(r) M (r^2 + (l + a \cos \theta)^2) }{\sin\theta}
\left ( a \cos \theta d \theta + \sin \theta da \right ) =0.
\end{align*}
The first factor is nowhere zero in the exterior region, so we can solve
for $da$: 
\begin{align}
da =  - \frac{a \cos \theta}{\sin \theta} d \theta
\label{da},
\end{align}
and substitute back into $g_2(x)$, which  factorizes as:
\begin{align*}
g_2(x) = \frac{16 a \Delta(r) (r-x)}{r-M } g_3(x),
\end{align*}
where $g_3(x)$ is a quadratic polynomial in $x$. Obviously,
$g_2$ is identically zero only if $g_3 \equiv 0$. The highest order term of $g_3$ is:
\begin{align}
- r Q dQ + \frac{a}{\sin \theta} \left ( l(M-r) + a
M \cos \theta  \right ) d \theta
=0. \label{dQ}
\end{align}
At this point we need to split the
treatment in two cases depending on whether $Q\equiv 0$ or not.

For the case with $Q\not\equiv0$, we solve (\ref{dQ}) for $dQ$ and substitute
back into $g_3$ to obtain:
\begin{align*}
g_3(x) = \frac{a M (r-M) (r^2 + (l+ a \cos \theta)^2)}{r \sin \theta} \left (
l + a \cos \theta \right ) d \theta.
\end{align*}
Thus $g_3 (x) \equiv 0$ can can only happen if $l + a \cos \theta=0$. Taking its differential and inserting
$da$  from (\ref{da}) yields
$- a (\sin \theta)^{-1} d \theta =0$, hence 
$d\theta=da=dr=dQ=0$ and we have no continuous degeneration.

The remaining case is when $Q=0$ and $dQ=0$. We want to  impose $g_3(x) \equiv 
0$, so that in particular it must be that $g_3(x = M)=0$. Evaluating:
\begin{align*}
g_3(x=M)= \frac{M a^2 \cos \theta  d \theta 
\left ( r^2  + (l + a \cos \theta)^2 
\right )}{\sin \theta},
\end{align*}
which implies  $ \cos \theta d \theta =0$, and hence
$d \theta = 0$ (if $d\theta \neq 0$ it must be $\theta = \pi/2$ 
so that  $d\theta=0$ anyway). Consequently
$d\theta=da=dr=dQ=0$, which finishes the proof.

\end{proof}

\end{thm}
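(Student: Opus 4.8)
The plan is to invoke the preparatory lemmas to reduce the theorem to the single polynomial identity $g_2(x)\equiv 0$ and then extract a contradiction, order by order in $x$, from the assumption that a non-trivial variation $(da,dr,dQ,d\theta)$ solves it. By Lemma~\ref{lem:deg} the two asserted degeneracies genuinely occur, so only the converse remains: that no further continuous degeneracy exists after the two ``gauge fixings'' $M=\mathrm{const}$ and $l=\mathrm{const}$. By Lemma~\ref{lem:hirrat} the function $\sqrt{1-h^2(x)}$ is irrational, which legitimises the splitting that produced \eqref{eq:cond2}; the non-vanishing lemma for the denominator in \eqref{eq:cond2} together with Lemma~\ref{lem:onlyg2} then collapses condition \eqref{eq:cond3} to the requirement $g_2(x)\equiv 0$. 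At this point every remaining statement is about coefficients of an explicit polynomial in $x$ with coefficients linear in $(da,dr,dQ,d\theta)$, so the whole argument is a finite descent.

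First I would read off the leading ($x^5$) coefficient of $g_2$, namely \eqref{eq:coeff5}, and solve it for $dr$; substituting back drops the degree to three. The next simplification is the observation that the combination $Mw_3+w_2$ of the two top coefficients of the resulting cubic is, up to a nowhere-vanishing factor $16\Delta(r)M(r^2+(l+a\cos\theta)^2)/\sin\theta$, just $a\cos\theta\, d\theta+\sin\theta\, da$; setting this to zero gives \eqref{da}, $da=-(a\cos\theta/\sin\theta)d\theta$. Feeding \eqref{da} back in, I expect $g_2$ to factor as $16a\Delta(r)(r-x)/(r-M)$ times a quadratic $g_3(x)$, so that $g_2\equiv 0\iff g_3\equiv 0$. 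The top coefficient of $g_3$ is \eqref{dQ}, a relation between $dQ$ and $d\theta$.

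Here the argument bifurcates on whether $Q$ vanishes. If $Q\not\equiv 0$, solve \eqref{dQ} for $dQ$, substitute, and find that $g_3$ reduces to a multiple of $(l+a\cos\theta)\,d\theta$ with nowhere-vanishing prefactor, forcing $l+a\cos\theta=0$; differentiating this constraint and using \eqref{da} yields $-a(\sin\theta)^{-1}d\theta=0$, hence $d\theta=da=dr=dQ=0$. If instead $Q\equiv 0$ (so $dQ=0$), evaluate $g_3$ at the special point $x=M$: it equals $Ma^2\cos\theta\,d\theta\,(r^2+(l+a\cos\theta)^2)/\sin\theta$, so $\cos\theta\, d\theta=0$, and even in the borderline case $\theta=\pi/2$ one gets $d\theta=0$; again all variations vanish. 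In both branches the only solution is trivial, so no continuous degeneration survives beyond the two found in Lemma~\ref{lem:deg}.

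The main obstacle is not conceptual but bookkeeping: one must be confident that the substitutions ``solve the top coefficient, reinsert, re-read the new top coefficient'' do not secretly divide by something that can vanish in the exterior region, and that the claimed factorisations (the $(r-x)$ factor, the explicit prefactors $\Delta(r)$, $r-M$, $r^2+(l+a\cos\theta)^2$, $\sin\theta$) are exactly right. Each such prefactor must be checked to be nonzero for $r>r_+$ and $\sin\theta\neq 0$ --- which is where the exterior-region hypotheses $\Delta(r)>0$, $r>r_+>0$, and the standing assumption $a\neq 0$ are used. Since the coefficient extraction is linear algebra over the field of rational functions of the parameters, the descent is guaranteed to terminate; the only real care needed is the case split on $Q$, because $Q$ appears in \eqref{dQ} in a way that changes the structure of $g_3$, and the $Q\equiv 0$ sub-case has to be closed by the ad hoc evaluation at $x=M$ rather than by another division.
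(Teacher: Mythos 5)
Your proposal is correct and follows essentially the same route as the paper's proof: reduction to $g_2(x)\equiv 0$ via Lemmas \ref{lem:deg}, \ref{lem:hirrat} and \ref{lem:onlyg2}, then the identical descent — solve the leading coefficient \eqref{eq:coeff5} for $dr$, use $Mw_3+w_2$ to fix $da$, factor out $(r-x)$ to obtain $g_3$, and close the two $Q$-cases exactly as in the text (including the evaluation at $x=M$ when $Q=0$). No gaps to report.
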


\section{Conclusion}
 
In the present work we showed that there exist only two continuous degeneracies for the shadow of any observer in the exterior region of a Kerr-Newman-Taub-NUT spacetime. In particular when one focuses on the physically relevant case of Kerr-Newman (hence $l=0$) the only continuous degeneracy is given by scaling of all parameters with the mass. If one assumes that, apart from the discrete spacetime isometries, no discrete degeneracies exist, then the result presented in this paper suggests that in principle an observer in the exterior region of a Kerr-Newman spacetime could extract the relative angular momentum $a/M$ of the black hole, as well as the relative charge $Q/M$, the relative distance $r/M$, and the angle of observation relative to the rotation axis of the black hole. Additionally, one could extract how fast one is moving in comparison to a standard observer at that point in the manifold. Preliminary calculations suggest that the same result should hold true for the Kerr-Newman-de Sitter case. The proof for this case is work in progress. It is interesting however, that from an observation of the shadow alone an observer can never conclude that the Taub-NUT charge must vanish. \\
Note that if one looks at the projection of the shadows of the standard observers on the complex plane and chooses a parabolic and a hyperbolic M\"obius transformation (see Appendix \ref{app:A}) for each standard observer such that all shadows intersect the real axis at $+1$ and $-1$, 
it turns out that the changes of the shape due to variations of $r/M$ and $Q/M$ are extremely small. Hence reading off these parameters from the shadow would require a very precise measurement of the shadow curve. Adding in the fact that the light sources can be rather messy, the observational task is certainly formidable, so that at least in the foreseeable future there is little hope that from the shape of the shadows alone one can extract in practice
more than a rough estimate on $a/M$. However, from a theoretical point of view it seems plausible (and our results are
a strong indication in this direction) that one can extract very
detailed information about a black hole
just by looking at it. 


\section*{Acknowledgements}

M.M. acknowledges financial support under the project
FIS2015-65140-P (MINECO, Spain and FEDER, UE).

\appendix

\section{M\"obius transformation}\label{app:A}

The Riemann sphere $\mathbb{S}^2$ can be blobally
parametrized by stereographic projection by means of $\C:=\mathbb{C} \cup
\{ \infty \}$. A M\"obius transformation is a map 
\begin{align*}
\chi: \C & \longrightarrow \C, \\
c & \longrightarrow \chi(z) := \frac{a z + b}{c z + d}, \quad \quad
a,b,c,d \in \mathbb{C}, \quad \quad ad-bc = 1.
\end{align*}
The set of all M\"obius transformations define a group, denoted by $\Moeb$.

This group is isomorphic to the set of positively oriented conformal maps of 
$\mathbb{S}^2$ endowed with the standard round metrid.

In this appendix we prove the following theorem. 
\begin{thm}
\label{InvThm}
Let $\C := \mathbb{C} \cup \{ \infty \}$ be the Riemann sphere and 
$\Moeb$ 
the set of M\"obieus transformations. Let 
$c: \mathbb{S}^1 \longrightarrow \C$ be an embedding.
If there exists $\chi \neq \Id_{\C}$ that leaves $c$ invariant as a 
set, then $c$ is a generalized circle (i.e. a circle or a straight line with the 
point at infinity attached), or there exists $n \in \mathbb{N}$ such that
$\chi^n = \Id_{\C}$ and $c$ is conjugate to a closed curve invariant
under rotations of angle $\frac{2\pi m}{n}$, $m \in \mathbb{Z}$ around the
origin of $\mathbb{C}$.
\end{thm}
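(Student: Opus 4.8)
The plan is to classify the transformation $\chi$ up to conjugacy in $\Moeb$ and to read off, type by type, which embedded circles it can leave invariant. Conjugating $\chi$ by $\mu\in\Moeb$ replaces $c$ by $\mu\circ c$, and both conclusions of the theorem --- ``$c$ is a generalized circle'' and ``$\chi^{n}=\Id$ for some $n$ and $c$ is conjugate to a rotationally symmetric closed curve'' --- are themselves preserved under $\Moeb$, so I may assume $\chi$ is in one of the standard normal forms: (i) the parabolic $z\mapsto z+1$, with unique fixed point $\infty$; (ii) the loxodromic (in particular hyperbolic) $z\mapsto\lambda z$ with $|\lambda|\notin\{0,1\}$, with fixed points $0,\infty$; or (iii) the elliptic $z\mapsto e^{2\pi i\alpha}z$, $\alpha\in(0,1)$, also with fixed points $0,\infty$. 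I would record at the outset three elementary facts used repeatedly: $c(\mathbb{S}^1)$ is compact and connected; deleting one or two of its points leaves one or two open arcs respectively; and a $\chi$-invariant closed set either is contained in the fixed-point set of $\chi$ --- impossible here, since $\{0\}$, $\{\infty\}$, $\{0,\infty\}$ are not Jordan curves --- or contains the full orbit closure of some non-fixed point.

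The elliptic case is immediate. If $\alpha=m/n$ in lowest terms, then $\chi^{n}=\Id$ and $c(\mathbb{S}^1)$ is literally invariant under all rotations of angle $2\pi m'/n$, $m'\in\mathbb{Z}$, about the origin; together with the normal-form conjugation this is exactly the second alternative. If $\alpha$ is irrational, pick $z_0\in c(\mathbb{S}^1)\setminus\{0,\infty\}$; the orbit $\{e^{2\pi in\alpha}z_0\}_{n\in\mathbb{Z}}$ is dense in the circle $\{|z|=|z_0|\}$, so closedness of $c(\mathbb{S}^1)$ forces that whole circle into $c(\mathbb{S}^1)$. Since the circle is itself an embedded $\mathbb{S}^1$ and $c$ is an embedding, $c^{-1}$ of the circle is a closed subset of $\mathbb{S}^1$ homeomorphic to $\mathbb{S}^1$, hence all of $\mathbb{S}^1$; thus $c(\mathbb{S}^1)$ equals that circle, a generalized circle.

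The parabolic and loxodromic cases carry the real content, and I expect the main obstacle there. By the orbit-closure fact, $c(\mathbb{S}^1)$ must pass through every fixed point of $\chi$ --- the forward and backward orbits of a non-fixed point accumulate exactly on the fixed-point set, and $c(\mathbb{S}^1)$ is closed. Deleting the fixed point(s) cuts $c(\mathbb{S}^1)$ into one arc (parabolic) or two arcs (loxodromic), each joining a fixed point to a fixed point, with $\chi$ either preserving an arc or interchanging the two. The decisive, most delicate step is then a local study of such an arc near a fixed point $p$ in a coordinate linearising $\chi$ there: one must show that the germ at $p$ of an arc invariant (or swapped with its partner) under the local model --- $w\mapsto\lambda w$ in the loxodromic case, $w\mapsto w/(1+w)$ at $w=0$ in the parabolic case --- is a straight ray issuing from $p$. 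The naive danger is that an invariant arc could oscillate or spiral into $p$, and excluding this is exactly the hard point: the correct input is that $c$ is not merely a topological embedding but the \emph{real-analytic} shadow curve, so it has a well-defined tangent line at $p$; invariance forces that tangent line to be $\chi$-invariant, a short power-series computation (solving $f(\lambda x)=\lambda f(x)$, and its analogue at a parabolic point) then forces the germ to coincide with the tangent line, and analyticity propagates straightness from the germ at $p$ to all of $c$. Hence $c$ is a line or circle through the fixed point(s), i.e. a generalized circle.

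Finally I would assemble the cases: elliptic-irrational, hyperbolic, and parabolic each give ``$c$ is a generalized circle''; elliptic-rational gives ``$\chi^{n}=\Id$ and $c$ conjugate to a rotationally symmetric curve''; and, once the germ analysis is in place, nothing else survives --- precisely the asserted dichotomy. A short bookkeeping lemma, worth isolating, records that the curves built in the previous paragraph out of straight rays terminating at $0$ and $\infty$ are exactly the generalized circles through those points (lines through the origin), so that no spurious family slips through.
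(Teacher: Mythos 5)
Your overall architecture (reduce to the four conjugacy normal forms, handle the elliptic case via orbit closures, and then study an invariant arc near a fixed point of a parabolic/hyperbolic/loxodromic representative) is the same as the paper's, and your elliptic argument is fine. The gap is in the step you yourself flag as decisive. First, you resolve it by declaring that $c$ is ``the real-analytic shadow curve'', but the theorem is stated for an arbitrary (smooth) embedding $c:\mathbb{S}^1\to\mathbb{C}\cup\{\infty\}$, with no analyticity hypothesis; importing analyticity proves a strictly weaker statement than the one asserted. The whole difficulty of the theorem is precisely to rule out oscillating/spiralling invariant curves using only the differentiability of the embedding at the fixed point: note that at the merely topological level the statement is false (the closure of a logarithmic spiral is a Jordan curve invariant under a loxodromic map), so the smoothness at the fixed point must be exploited explicitly. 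The paper does this with only $C^1$ information: at a hyperbolic fixed point the invariance gives the functional equation $y(e^{\lambda}x)=e^{\lambda}y(x)$ for the local graph, whose solutions are $y(x)=xP(\lambda^{-1}\ln x)$ with $P$ periodic, and existence of the derivative at $x=0$ forces $P$ constant; at a loxodromic fixed point the differential $d\chi=k$ with $k\notin\mathbb{R}$ rotates every tangent vector, so no invariant $C^1$ loop exists at all; the parabolic case is treated at the fixed point $\infty$ using the round metric and the behaviour of the tangent angle along the orbit.

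Second, even granting analyticity, the outcome you assert for the germ analysis is wrong in two of the three cases, so the ``short power-series computation'' cannot deliver what you claim. In the parabolic model $w\mapsto w/(1+w)$ every circle tangent to the real axis at $0$ is invariant, so invariant analytic germs at the fixed point are emphatically not straight rays (they are still generalized circles, but your stated intermediate claim, and hence the step ``the germ coincides with the tangent line'', fails); the correct local conclusion requires an actual computation you have not carried out. In the genuinely loxodromic case no straight ray through $0$ is invariant (it gets rotated by $\arg\lambda$), so the claimed conclusion is unprovable; the correct statement there is that no invariant smooth loop exists, which your sketch never establishes. As it stands, the central step of your proof is therefore both based on an extra hypothesis and incorrect in its asserted conclusions, while the paper's argument closes exactly this point within the smooth category.
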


By ``invariant as a set'' we mean that there is a diffeomorphism
$f : \mathbb{S}^1 \longrightarrow \mathbb{S}^1$ such that
$\chi \circ c = c \circ f$ (the image of $c$ and $\chi \circ c$
are obviously the same).
A closed embedded curve $c$ is {\it conjugate} to another
closed embedded curve $c_1$ if there exists $\chi_1 \in \Moeb$ such that
$\chi_1 \circ c = c_1$.

\vspace{3mm}

{\it Proof:}
We first note that the problem is invariant under conjugation:
for any  $\xi \in \Moeb$ the conjugate curve $c_{\xi} := 
\xi \circ c$ is invariant (as a set) under
the conjugate transformation $\chi_\xi := \xi \cdot \chi \cdot \xi^{-1}$, 
as it is obvious from:
\begin{align*}
\chi_\xi \circ c_{\xi} = 
(\xi \circ \chi \circ \xi^{-1} ) \circ (\xi \circ c ) =
\xi \circ (\chi \circ c) = \xi \circ c \circ f = c_{\xi} \circ f.
\end{align*}
It is well-known that 
all M\"obius transformations (different from the identity) can be
classified by conjugation into four 
disjoint classes:
parabolic, elliptic, hyperbolic or loxodromic. Each class admits
a canonical representative, in the sense that any element in the class
is conjugate to this representative. The representatives can be chosen as
follows:
\begin{align}
& \mbox{Parabolic:} \quad \quad & & \chi_P(z)= \frac{z}{1+z} &&  \label{list} \\
& \mbox{Elliptic:} \quad \quad & & \chi_E(z)=   e^{i \theta} z, \quad \quad
&&0 \neq \theta \in \mathbb{R}_{\mbox{mod } 2 \pi} \nonumber \\
& \mbox{Hyperbolic:} \quad \quad & & \chi_H(z)=  e^{\lambda} z \quad \quad 
&& \lambda \in 
\mathbb{R} \setminus \{ 0 \} \nonumber \\
& \mbox{Loxodromic:} \quad \quad & & \chi_L(z)=   k z \quad \quad && k \in \mathbb{C}
\setminus \{ \mathbb{R} \} \quad \quad \mbox{and} \quad \quad |k| \neq 1
\nonumber 
\end{align}
Thus, we may assume without loss of generality
that the transformation $\chi$ leaving $c$
invariant is one of these canonical
transformations. Obviously $\chi^m, m \in \mathbb{Z}$ also leaves $c$ invariant.
The action of $\chi^m$ is immediate to write down in the elliptic,
hyperbolic and loxodromic canonical cases.
In the parabolic case, a simple inductive argument
shows that:
\begin{align*}
\chi_P^m (z ) = \frac{z}{m z+1}, \quad \quad m \in \mathbb{Z}.
\end{align*} 
Thus, it follows
that  the cyclic group $\{ \chi^n ; n \in \mathbb{Z} \}$ is finite 
(i.e. $\chi^m = \mbox{Id}$ for some $m \in \mathbb{Z}$)
if and only if $\chi$ is elliptic
and $\frac{\theta}{2\pi} \in \mathbb{Q}_{\mbox{mod } 1}$.

Let us consider first the loxodromic, hyperbolic and parabolic cases. We start
by showing that  the embedded loop $c$ must pass through the
origin $z=0$  of the complex plane. Let $0 \neq z_0 \in \mathbb{C}$ be any point on the curve, i.e. $z_0 \in \mbox{Im} (c)$
and define, for each $m \in \mathbb{Z}$, $z_m := 
\chi^m (z_0) \in \C$. From invariance of the
curve under $\chi$, all points in the sequence $\{ z_m \}$ lie on
the image of the curve. From compactness of $\mbox{Im} (c) \subset \C$
it follows that the set of accumulation points of $\{ z_m\}$ is non-empty
and a subset of $\mbox{Im}(c)$.

When $\chi$ 
is hyperbolic or loxodromic, the canonical form is
$\chi_k(z) := k z$ with $|k| \neq 1$. The sequences are now 
$z_m := \chi_k^m (z_0) =  k^m z_0$. If $|k| > 1$, the 
sequence converges to $z=0$ as $m \rightarrow -\infty$. If 
$|k| <1$, the sequence converges to $z=0$ as $m \rightarrow \infty$. In either
case $z=0$ is an accumulation point, so the loop $c$ passes throught $z=0$.
When $\chi$ is parabolic, the sequence is
$z_m = \frac{z_0}{m z_0 + 1 }$ which converges to $z=0$ as $|m| \rightarrow 
\infty$, and we reach the same conclusion.

We can now show that a loxodromic M\"obius transformation does not leave
any closed embedded loop invariant. Let us take differentials in the
invariance equation $\chi \circ c = c \circ f$ and evaluate at
the invariant point $p := \{ z=0\}$:
\begin{align*}
d \chi |_p (\dot c) = \dot{f} |_{c^{-1} (p)} \dot{c},
\end{align*}
which simply states the fact that
the differential map of $\chi_p$ must 
preserve the direction of $\dot{c}|_p$ (it may change its scale, but not 
the direction). The differential of $\chi(z)= k z$ at $z=0$
is  $d \chi |_{z=0} = k$. Thus, this differential
acts on a vector $v$ by scaling with $|k|$ and rotating by $\mbox{arg} (k)$.
When $k$ is not real, all vectors $v \neq 0$ change direction and
we reach  a contradiction. Thus, no embedded loop is invariant
under a  loxodromic M\"obius transformation.

We next consider the hyperbolic case. The canonical representative
is now $\chi = \chi_{H}$. Let $\xi$ be a rotation of the form
$\xi(z) = e^{i\alpha} z$ $\alpha \in \mathbb{R}$. Upon conjugation with
$\xi$, the map $\chi_H$ remains unchanged.  The conjugate
curve 
$\xi \circ c$ passes though $z=0$, and the parameter $\alpha$ can be
adjusted so that its tangent vector there points along the
real axis $x$.
Since $c$ is an embedded curve, there is a 
neighbourhood $U$ of $z=0$ such that $U \cap c$ is connected 
and in fact a graph over
the real axis. 
After restricting $U$ if necessary we may assume that
$U$ is an open disk centered at $z=0$.
We consider the curve $c_U := c \cap U$ from now on.
This curve can be parametrized by $x$, i.e.
$c(x) = x + i y(x)$ where $y(x)$
is a smooth function of $x \in 
(-\epsilon,\epsilon)$. The parameter $\lambda$
in the definition of 
$\chi_H$ can be assumed to be negative
(if it were positive simply replace 
$\chi_H$ by $\chi_H^{-1}$). Then $\chi_H$ maps $U$ into itself, and leaves
the curve $c_U$ invariant. So, it must be the case
that, for all $x \in (- \epsilon, \epsilon)$:
\begin{align}
e^{\lambda} ( x + i y(x) ) = x' (x) + i y(x'(x))  \quad
\Longleftrightarrow  \quad y(e^{\lambda} x) = e^{\lambda} y(x).
\label{invariance}
\end{align}
where $x'(x)$ indicates the reparametrization of the curve induced
by the M\"obius transformation $\chi_H$. Define the function
$P(u) := e^{-\lambda u} y(e^{\lambda u})$. By construction, $P(u)$ is smooth
on $(-\infty,\lambda^{-1} \ln \epsilon)$. In terms of $P$, the function $y(x)$ restricted to $x>0$ takes the form $y(x)= x P(\lambda^{-1} \ln x)$. 
The  invariance property (\ref{invariance}) becomes, when
applied at the point $x = e^{\lambda u}$:
\begin{align*}
P(u+1)= e^{-\lambda u} e^{-\lambda} y(e^{\lambda u} e^{\lambda} )
= e^{-\lambda u} y (e^{\lambda u} )= P(u).
\end{align*}
So $P(u)$ is a periodic function of period one. We can now compute
the derivative of $y(x)$ (prime denotes derivative with respect to $u$):
\begin{align*}
\frac{dy(x)}{dx} = 
P(\lambda^{-1} \ln x) + \lambda^{-1} P' |_{\lambda^{-1} \ln x}.
\end{align*}
If $P(u)$ is not a constant function
the combination $P(u) + \lambda^{-1} P'(u)$ does not converge 
as $u \rightarrow - \infty$. To show this, take the sequence $u_n = u_0 - n$
with $u_0 \in [-1,0)$ defined by the condition that
$P(u_0)$ attains the supremum of $P(u)$
and another sequence  $u'_n = u_1 - n$ where
$u_1 \in [-1,0)$ is the value where $P(u)$ attains the infimum.
By periodicity, the sequences $P(u_n)$ and $P(u'_n)$ are both constant.
Moreover,  $P'$ vanishes on all points $u_n$ and $u'_n$.
Thus, the sequences $\{ P(u_n) + \lambda^{-1} P'(u_n) \}$
and $\{ P(u'_n) + \lambda^{-1} P'(u'_n) \}$
 converge to the same
limit if and only if $P(u_0) = P(u_1)$, i.e.  if the function $P(u)$
is constant, as claimed.
As a consequence, $\frac{dy}{dx}$ converges
as $x \rightarrow 0^+$ if and only if $P(u) = a$ for some constant $a$,
or equivalently iff $y(x) = ax$. Since,
in our setup, $\frac{dy}{dx}=0$ at $x=0$  we conclude that $y(x)=0$.
We have proved this fact in a neighbourhood $U$ of $0$,
but this extends to the whole loop $c$ by applying repeatedly the transformation
$\chi_H$. In summary, we have shown that the only  embedded
loops invariant under the canonical
representative $\chi_H$ of hyperbolic M\"obius transformations is
the line $(x, y=0)$, and arbitrary rotations thereof around
the origin. We now use the property that
M\"obius transformations map generalized circles 
into generalized circles, and conclude that 
an embedded loop which is not a generalized circle can never be invariant
under a hyperbolic M\"obius transformation.

We want to use a similar argument for the parabolic case. To that aim, it is 
preferable to use a different representative.
More precisely, recall that for $\chi = \chi_P$ given in 
(\ref{list}) the invariant embedded loop  $c$ necessarily passes
through $z=0$. Let us apply a conjugation
with the inversion map $\hat{\xi}(z) = -1/z$. The conjugate
$\widehat{\chi}_P = \hat{\xi} \circ \chi_P \circ \hat{\xi}^{-1}$ is given by
$\widehat{\chi}_P (z) = z -1$ and the conjugate loop
$\hat{c}:= \hat{\xi} \circ c$ passes through the point at infinity. 
Consider the vector field:
\begin{align*}
\zeta = z^2 \partial_z + \overline{z}^2 \partial_{\overline{z}}.
\end{align*}
This field is smooth in a neighbourhood of the point at infinity.
Indeed,
the vector field
$\partial_{x'} =
\partial_{z'} + \partial_{\overline{z}'}$ is clearly smooth in a neighbourhood
of zero. The inversion map $z' = - \frac{1}{z}$ transforms
this neighbourhood of zero into a neighbourhood of infinity and transforms
the vector field $\partial_{x'}$ into $\zeta$, from which smoothness follows.
In the coordinates $\{x,y\}$ defined by $z= x+ i y$ this vector
field takes the form:
\begin{align*}
\zeta = \left (x^2 - y^2 \right ) \partial_x + 2 xy \partial_y.
\end{align*}
The property of invariance of an embedded loop 
under a M\"obius transformation 
is
preserved by reparametrizations of the curve, so we are free to choose
the parametrization of $\hat{c}$. However,  we must make sure that the 
parameter is smooth everywhere, including a neighbourhood
of infinity. To that aim
we choose to parametrize $\hat{c}$ with
arc length $s$ with respect to the round sphere metric:
\begin{align}
ds^2 = \frac{1}{\left ( 1+ \frac{1}{4} (x^2 + y^2 ) \right )^2}
(dx^2 + dy^2),
\label{metric}
\end{align}
which extends smoothly to the point at infinity.
As before, let $0 \neq z_0= \hat{c}(s_0) = (x_0,y_0) \in \mathbb{C}$
be a point on the curve. 
From the condition that the  tangent vector
$T|_p$ of the curve is unit with respect to (\ref{metric}),
there exists $\alpha \in  [0,2\pi)$ such that:
\begin{align*}
T|_p = F|_p \left ( \cos \alpha \partial_x + \sin \alpha
\partial_y \right ),
\end{align*}
with $F|_p$ determined by:
\begin{align*}
F|_p = \left . 1 + \frac{1}{4} \left ( x^2 + y^2 \right ) \right |_{(x_0,y_0)}.
\end{align*}
We compute the scalar product with the vector $\zeta$ to find:
\begin{align*}
\langle T|_p, \zeta |_p \rangle
=
\left . \frac{\cos \alpha (x^2 - y^2 ) + 2 \sin \alpha xy }
{1 + \frac{1}{4} \left ( x^2 + y^2 \right )}
\right |_{(x_0,y_0)}.
\end{align*}
Consider now the sequence of points $\{ z_m = (x_0 -m, y_0)\} $. From
invariance under $\widehat{\chi}_P$, they also also lie on the curve
$\hat{c}$. In fact, the set $\mbox{Im}(\hat{c})$
defines a periodic submanifold, in the sense that 
a unit translation along the $x$ axis leaves it invariant.
As a consequence, all the the tangent 
vectors $T_{p_m}$
of the curve at each point $z_m$ must be parallel to each other
(in the natural euclidean sense of the term).
Hence $\alpha$ is the same for all $z_m$. Let us
compute the limit along the sequence of the scalar product
$\langle T|_{p_m}, \zeta |_{p_m} \rangle$:
\begin{align*}
\lim_{m \longrightarrow \infty}
\langle T|_{p_m}, \zeta |_{p_m} \rangle
=
\lim_{m \longrightarrow \infty}
\frac{
\cos \alpha ( (x_0-m)^2 - y_0^2 ) + 2 \sin \alpha (x_0 - m)
y_0 }{1 + \frac{1}{4} \left ( (x_0-m)^2 + y_0^2 \right )} =
4 \cos \alpha.
\end{align*}
Given that the curve is smooth evergywhere, including infinity,
and that the sequence $\{ z_m\}$ converges to the point at infinity,
it follows that all the tangent vectors $T|_{p_m}$ must converge,
namely to the unit tangent vector $T_{\infty} $ to the curve there. The scalar
products above must then converge to a single finite value, and this must happen
independently of the initial point $z_0$.
Since the limit depends on $\alpha$ we conclude that $\alpha$ must be
the same for all points along the curve. If $\alpha = \frac{\pi}{2}$
or $\alpha = \frac{3\pi}{2}$
then the curve would be an infinite collection of vertical lines
in the $\{ x,y\}$ plane, all of them passing though the point
at infinity and the curve $\hat{c}$ would not be embedded.
Thus the tangent vector $T_p$ must have a non-zero component
along the $x$ axis everywhere along the curve.
This implies that it can be described as a graph $y(x)$ on the $x$
axis. Since $y(x)$ must reach a local maximum and $\alpha$ vanishes there
we conclude that $\alpha=0$ at all points, and hence that 
$y=y_0 = \mbox{const}$.
So, the embedded loop $\hat{c}$ must be the straight line $y =y_0$. This claim is
for embedded curves invariant under  the parabolic transformation $z \rightarrow 
z -1$. Upon conjugation, and using again that M\"oebius transformations
map generalized circles into generalized circes, we conclude
that the only
embedded closed loops invariant under a parabolic transformation
are generalized circles.

It only remains to consider the elliptic case, i.e. $\chi = \chi_E$.
Since $\chi_E$ is a rotation of angle $\theta$
of the complex plane around its origin, the invariant embedded loop $c$
defines a figure invariant under a rotation of angle $\theta \neq 2 \pi k$,
$k \in \mathbb{Z}$.
Consider the set of all angles $\beta \in (0,2\pi)$
under which this figure is invariant
and let $\beta_0$ be its infimum. If $\beta_0 =0$, the curve
must be a circle. If $\beta_0$ is different from zero, then there must exist 
$n \in \mathbb{N}$ such that $\beta_0 = \frac{2\pi}{n}$ (if such $n$
did not exist, define $n \in \mathbb{N}$ by
$n \beta_0 <2\pi < (n+1) \beta_0 $,  the angle $(n+1) \beta_0
- 2\pi$ is positive, smaller that $\beta_0$ and belongs to the set
of rotation angles that leave the figure invariant, which is a contradiction.)
Thus $\beta_0 = \frac{2\pi}{n}$ and in fact all other symmetry angles
must be a multiple of this (by a similar argument as before).
The number $n$ is called the {\it order of symmetry} of the figure.
In summary, the closed embedded loop $c$ is invariant under $\chi_E$
if and only if it is a circle centered at zero, or a figure with a discrete
rotational symmetry of order $n$. The statement of the theorem then follows
once again from the fact that the collection of generalized circles is preserved
under M\"obius transformations. $\hfill \Box$

\vspace{5mm}

As discussed in the main text, the shadow curve for suitable chosen
observers at any point in the class of black hole spacetimes under
consideration here has the property of being reflection symetric.
In precise terms, let  the map $r: \C
\longrightarrow \C$ be defined by reflection 
with respect to the real axis  $y=0$, i.e. $r(z) = \zb$.
A closed embedded loop
$c : \mathbb{S}^1  \longrightarrow \C$
is {\bf reflection symmetric} if there exists
a smooth map $f_1 : \mathbb{S}^1 \longrightarrow \mathbb{S}^1$ such that
$r \circ c = c \circ f_1$. One checks immediately that $f_1$ is 
a diffeomorphism of $\mathbb{S}^1$ (in fact an orientation reversing
diffeomorphism).  
Our aim is to determine
which elements $\chi \in \Moeb$ have the property that
the conjugate curve $\chi \circ c$ is also 
reflection symmetric.  Thus, we want to impose the 
condition that there exists a diffemorphism $f_2 : \mathbb{S}^1
\longrightarrow \mathbb{S}^1$ such that
$r \circ \chi \circ c = \chi \circ c \circ f_2$, which in turn is equivalent to
$\chi^{-1} \circ r \circ \chi \circ r^{-1} \circ c \circ f_1 = c \circ f_2$,
i.e. to:
\begin{align*}
\chi^{-1} \circ r \circ \chi \circ r^{-1} \circ c  = c \circ f,
\end{align*}
where $f := f_2 \circ f_1^{-1}$ is an orientation preserving
diffeomorphism of $\mathbb{S}^1$.
The map 
$\widetilde{\chi}:=
\chi^{-1} \circ r \circ \chi \circ r^{-1} $ is by construction an element
of the M\"obius group, and leaves the loop defined by $c$ invariant (as 
a submanifold). From Theorem \ref{InvThm} it follows
that $\widetilde{\chi}$ is the indentity map, unless either 
$\mbox{Im}(c)$
is conjugate to a figure with discrete rotational symmetry of order $n$
and, in addition, $\widetilde{\chi}$ is conjugate
to $\chi_{m,n} := z \rightarrow e^{i \frac{2 \pi m}{n}} z$ for some integer $m$
between $-n$ and $n$,  or else  $c$ is a generalized circle.  

In this paper we are interested in M\"obius transformations sufficiently close to
the identity that map reflection symmetric curves into 
reflection symmetric curves. Since, for fixed $n$
$\{ \xi_{m,n}; - n < m < n \}$ is discrete, it is disjoint
to a sufficiently
small neighbourhood of the identity map  $\Id_{\mathbb{C}}$, and we can ignore
the case of discrete rotational symmetry of order $n$.  Also, 
we restrict ourselves to non-degenerate spacetimes points, where the shadow curve
is not a generalized circle (for simplicity we call such
curves ``non--circular''). So, we conclude that
$\widetilde{\chi}$ must be the identity map, i.e.:
\begin{align*}
\chi^{-1} \circ r \circ \chi \circ r^{-1} = \Id_{\C} \quad \quad
\Longleftrightarrow \quad \quad
r \circ \chi \circ r^{-1}  = \chi.
\end{align*}
Letting $\chi$ correspond to the $\mbox{SL}(2,\mathbb{C})$ matrix:
\begin{eqnarray*}
\left ( \begin{array}{cc}
\alpha & \beta \\
\gamma & \delta \\
\end{array}\right ),
\end{eqnarray*}
it is immediate to compute that $ r \circ \chi \circ r^{-1}$ corresponds to the
$\mbox{SL}(2,\mathbb{C})$ matrix:
\begin{eqnarray*}
\left ( \begin{array}{cc}
\overline{\alpha} & \overline{\beta} \\
\overline{\gamma} & \overline{\delta} \\
\end{array} \right ).
\end{eqnarray*}
Thus, is $\chi$ is sufficiently close to the identity map and the
reflection symmetric curve $c$  is non-circular, it must be the
case that $\chi \in SL(2,\mathbb{R})$, i.e. all 
$\alpha, \beta, \gamma, \delta$ are real
parameters.

Our second aim is to identify the infinitessimal transformations with
generate this subgroup of M\"obius transformations. Consider
a one parameter subgroup $\tau: \mathbb{R} \longrightarrow 
\mbox{SL}(2,\mathbb{C})$ of $\mbox{SL}(2,\mathbb{C})$
and denote by
$\chi_{\tau(s)}$, $s \in \mathbb{R}$ 
the corresponding curve in the M\"obius group. A straightforward
computation gives, for each $z \in \mathbb{C}$:
\begin{align*}
\frac{d \chi_{\tau(s)}(z)}{d s}  =
\beta_0 + \left ( \alpha_0 - \delta_0 \right ) - \gamma_0 z^2,
\end{align*}
where $\alpha_0 = \left . \frac{d \alpha(s)}{ds} \right |_{s=0}$,
$\beta_0 = \left . \frac{d \beta(s)}{ds} \right |_{s=0}$,
$\gamma_0 = \left . \frac{d \gamma(s)}{ds}\right |_{s=0}$,
$\delta_0 = \left . \frac{d \delta(s)}{ds}\right |_{s=0}$. The condition that
the curve $\tau(s)$ takes values in $\mbox{SL}(2,\mathbb{C})$ requires that $\delta_0
= - \alpha_0$. Thus, the infinitessimal generator of this one-parameter
subgroup  is:
\begin{align*}
\xi = 
\left ( \beta_0 + 2 \alpha_0 z - \gamma_0 z^2 \right )  \partial_z 
+ 
\left ( \overline{\beta_0} + 2 \overline{\alpha_0} \, \zb- 
\overline{\gamma_0} \, \zb^2 \right )  \partial_{\zb}.
\end{align*}
Thus if we restrict ourselves to the subgroup of transformation 
preserving the reflection symmetry of a non-circular curve $c$,
the generators are: 
\begin{align*}
\xi =  \beta_0 \left ( \partial_z + \partial_{\zb} \right )
+2 \alpha_0 \left ( z\partial_z + \zb \partial_{\zb} \right )
- \gamma_0 \left ( z^2\partial_z + \zb^2 \partial_{\zb} \right ),
\quad \quad 
\alpha_0,\beta_0,\gamma_0 \in \mathbb{R}.
\end{align*}
In terms of Cartesian coordinates $\{ x,y\}$  on the stereograhic plane,
i.e. $z = x + i y$, this vector field becomes:
\begin{align*}
\xi =  \beta_0 \partial_x 
+2 \alpha_0 \left ( x \partial_x + y \partial_y \right )
- \gamma_0 \left ( \left ( x^2 - y^2 \right ) \partial_x
+ 2 x y \partial_{y} \right ).
\end{align*}
So, the three generators of  M\"obius transformations preserving 
reflection symmetry turn out to be  the translations along the
$x$ axis $\xi_1 = \partial_x$, the dilations about the origin
$\xi_2 = x \partial_y + y \partial_x$ and a third conformal Killing vector
given by
$\xi_3 = (x^2 - y^2) \partial_x + 2 xy \partial_y$. These vector fields
generate a Lie algebra with structure constants:
\begin{align*}
[ \xi_1, \xi_2] = \xi_1, \quad \quad
[ \xi_1, \xi_3] = 2\xi_2, \quad \quad
[\xi_2,\xi_3] = \xi_3.
\end{align*}
Note that the subset of reflection symmetric
transformations that leave the origin $\{x=0,y=0\}$
invariant is generated by $\{\xi_2,\xi_3\}$, which is, naturally,
a two-dimensional subalgebra. Another observation is that 
the only element in $\{ \xi_1,\xi_2,\xi_3\}$ which is a Killing vector
of $\mathbb{C} \cup \{ \infty \}$ endowed with the spherical metric
$ds^2 = \left ( 1 + \frac{1}{4} (x^2 + y^2) \right )^{-2} (dx^2 + dx^2)$,
is $4 \xi_1 + \xi_3$ (and its constant multiples). This Killing field
corresponds to rotations of the sphere leaving invariant the 
antipodal points for which the corresponding equator maps onto the real axis by stereographic projection.

\section{Partial derivatives of $f$ and $h$}\label{app:B}
\begin{subequations}
\begin{align}
\frac{\partial f}{\partial a}&= \frac{ 4x \Delta^2 -a^2\{x^2 +(l+a \cos  \theta )^2\}) \Delta' -\Delta(4a^2 x  + (x^2+l^2-a^2 \cos^2 \theta )\Delta')}{4 a^2 x \Delta^{3/2}\sin \theta }\\
\frac{\partial f}{\partial l}&= \frac{ l \{x^2 +(l+a \cos  \theta )^2\} \Delta' -2\Delta(2l x  + (l+a \cos \theta )\Delta')}{4 a x \Delta^{3/2}\sin \theta }\\
\frac{\partial f}{\partial M}&= -\frac{\{x^2 +(l+a \cos  \theta )^2\}(2\Delta + x \Delta') +4 x^2 \Delta}{4 a x \Delta^{3/2}\sin \theta }\\
\frac{\partial f}{\partial Q}&= -\frac{Q(\Delta'\{x^2 +(l+a \cos  \theta )^2\}+4 x \Delta)}{4ax\Delta^{3/2} \sin \theta }\\
\frac{\partial f}{\partial \theta}&=-\frac{2a (l+a\cos \theta ) \Delta'\sin^2 \theta + \cos \theta (\Delta'\{x^2 +(l+a \cos  \theta )^2\}-4 x \Delta)}{4 a x \Delta ^{1/2}\sin^2 \theta }\\ \label{eq:fxder}
\frac{\partial f}{\partial x}&=\frac{\{x^2 + (l+a \cos  \theta )^2\}((M-x)^3-M(M^2-a^2-Q^2+l^2))}{2 a x^2 \Delta^{3/2}\sin \theta }
\end{align}
\end{subequations} 

\begin{subequations}
\begin{align}
\frac{\partial h}{\partial a}&=\frac{4a x(-8 x \Delta(r)\Delta(x)+ (\Delta(r)+\Delta(x))((r^2-x^2)\Delta'(x)+4x\Delta(x)) }{\sqrt{\Delta(x)\Delta(r)}((r^2-x^2)\Delta'(x)+4x\Delta(x))^2}\\
\frac{\partial h}{\partial Q}&=\frac{4Q x(-8 x \Delta(r)\Delta(x)+ (\Delta(r)+\Delta(x))((r^2-x^2)\Delta'(x)+4x\Delta(x)) }{\sqrt{\Delta(x)\Delta(r)}((r^2-x^2)\Delta'(x)+4x\Delta(x))^2}\\
\frac{\partial h}{\partial l}&=\frac{4l x(8 x \Delta(r)\Delta(x)- (\Delta(r)+\Delta(x))((r^2-x^2)\Delta'(x)+4x\Delta(x)) }{\sqrt{\Delta(x)\Delta(r)}((r^2-x^2)\Delta'(x)+4x\Delta(x))^2}\\
\frac{\partial h}{\partial M}&=\frac{4x\left(r\Delta(x)(-4x\Delta(x)+(x^2-r^2)\Delta'(x)) + \Delta(r)(2(r^2+x^2)\Delta(x)+x(x^2-r^2)\Delta'(x)\right) }{\sqrt{\Delta(x)\Delta(r)}((r^2-x^2)\Delta'(x)+4x\Delta(x))^2}\\
\frac{\partial h}{\partial r}&=\frac{2 x\Delta(x)(4x\Delta(x)\Delta'(r)+(\Delta'(r)(r^2-x^2)-4r\Delta(r))\Delta'(x)) }{\sqrt{\Delta(x)\Delta(r)}((r^2-x^2)\Delta'(x)+4x\Delta(x))^2}\\\label{eq:hxder}
\frac{\partial h}{\partial x}&= \frac{2(r^2-x^2)\Delta(r)((x-M)^3+ M(M^2-a^2-Q^2 +l^2))}{\sqrt{\Delta(x)\Delta(r)}((r^2-x^2)\frac{\Delta'(x)}{2}+2x\Delta(x))^2} .
\end{align}
\end{subequations} 
\newcommand{\arxivref}[1]{\href{http://www.arxiv.org/abs/#1}{{arXiv.org:#1}}}
\newcommand{\mnras}{Monthly Notices of the Royal Astronomical Society}
\newcommand{\prd}{Phys. Rev. D}
\newcommand{\apj}{Astrophysical J.}

\bibliographystyle{amsplain}
\bibliography{shadow,kerr}

\end{document}